\documentclass[11pt]{article}

\usepackage[a4paper,margin=0.9in]{geometry}
\usepackage[T1]{fontenc}
\usepackage[utf8]{inputenc}
\IfFileExists{lmodern.sty}{\usepackage{lmodern}}{}
\usepackage{microtype}
\usepackage{amsmath,amssymb,amsthm,mathtools,bm}
\usepackage{booktabs,array}
\usepackage{graphicx}
\graphicspath{{./}{../}}
\usepackage{caption}
\usepackage{float}
\usepackage{tikz}
\usetikzlibrary{arrows.meta,positioning,calc,decorations.pathreplacing}
\usepackage{enumitem}
\usepackage{xcolor}
\usepackage{hyperref}
\usepackage[nameinlink,capitalise,noabbrev]{cleveref}

\definecolor{linkblue}{RGB}{18,73,127}
\definecolor{protectblue}{RGB}{34,104,165}
\definecolor{protectred}{RGB}{181,31,46}
\definecolor{protectgreen}{RGB}{32,126,96}
\definecolor{mutedgray}{RGB}{110,116,124}
\hypersetup{
 colorlinks=true, linkcolor=linkblue, citecolor=linkblue, urlcolor=linkblue,
 pdftitle={Symmetry-Fixed Holonomies and Spectral Isolation in Two-Cycle Photonic Geometries},
 pdfauthor={Michel Planat}
}

\setlist[itemize]{leftmargin=1.5em,itemsep=2pt,topsep=4pt}
\setlist[enumerate]{leftmargin=1.8em,itemsep=3pt,topsep=4pt}
\newtheorem{theorem}{Theorem}[section]
\newtheorem{proposition}[theorem]{Proposition}

\newtheorem{remark}[theorem]{Remark}

\newcommand{\R}{\mathbb{R}}
\newcommand{\C}{\mathbb{C}}
\newcommand{\Z}{\mathbb{Z}}
\newcommand{\Q}{\mathbb{Q}}
\newcommand{\U}{\mathrm{U}}
\newcommand{\SU}{\mathrm{SU}}
\newcommand{\ii}{\mathrm{i}}
\newcommand{\dd}{\mathrm{d}}
\newcommand{\cH}{\mathcal{H}}

\newcommand{\Aut}{\operatorname{Aut}}
\newcommand{\diag}{\operatorname{diag}}
\newcommand{\dist}{\operatorname{dist}}
\newcommand{\Hess}{\operatorname{Hess}}

\title{\textbf{Symmetry-Fixed Holonomies and Spectral Isolation\\ in Two-Cycle Photonic Geometries}\\[0.35em]
\large A Square Parent Manifold for a Qubit and a Hexagonal Qutrit Manifold}
\author{Michel Planat\\
\small  Universit\'e Marie et Louis Pasteur, Institut FEMTO-ST, CNRS\\
\small 25000 Besan\c{c}on, France\\
\small \href{mailto:michel.planat@femto-st.fr}{michel.planat@femto-st.fr}}
\date{}

\begin{document}
\maketitle

\begin{abstract}
A system with two periodic directions carries two commuting holonomies \(a=(u,v)\in\R^2/\Z^2\). We determine their distinguished values while separating lattice, arithmetic, and observable effects. Maximizing the lowest twisted eigenvalue places \(a\) at a deep hole of the momentum lattice. For every rectangular torus the maximizer is antiperiodic, so complex multiplication is sufficient for torsion optima but not necessary. Let \(G_\tau^{-1}\) be the dual metric and \(D_\tau(a)\) the normalized zeta determinant of the twisted Laplacian. At the rotation-fixed deep holes of the square and hexagonal lattices, symmetry gives the exact determinant response \(-\operatorname{Hess}_a\log D_\tau=2\pi(\Im\tau)G_\tau^{-1}\). With spectral wavenumber \(\kappa=2\pi\), the lowest manifolds are fourfold and threefold, with gaps \(2\kappa^2\) and \(4\kappa^2/3\); phase errors split them linearly while their centroids remain stationary. We then give a finite-device realization: an \(8\times8\) microring lattice closed by two phase-controlled seams. At a reported coupling scale of \(16\) GHz, its exact square-lattice spectrum has a \(17.32\) GHz shell gap and a \(1.92\) GHz doublet separation for a \(0.1\) holonomy error; a triangular-link configuration gives a threefold qutrit manifold with an \(18.11\) GHz gap. This is a quantitative spectroscopy proposal, not a claim of topological protection or a completed device.
\end{abstract}

\medskip
\noindent\textbf{Keywords.} synthetic gauge field; holonomy; antiperiodic boundary condition; photonic qubit; qutrit; dual rail; spectral gap; lattice units; complex multiplication; Kronecker limit formula.

\smallskip
\noindent\textbf{MSC 2020.} 81Q70; 81Q35; 81V80; 58J52; 81P68.

\section{Introduction}

Photonic quantum information requires a low-dimensional subspace that is controllable, measurable, and well separated from unwanted modes.  Path and polarization dual-rail qubits are standard in linear-optical quantum computing \cite{KokEtAl2007}; temporal and spectral modes provide another complete modal framework \cite{BrechtEtAl2015}, within the broader spatial, temporal, frequency, and polarization mode description of quantum light \cite{FabreTreps2020}.  Integrated photonics has also generated, manipulated, and measured path-encoded qutrits with three-dimensional multiports \cite{LuEtAl2020}.  Synthetic gauge fields and synthetic dimensions have made the associated phases directly programmable \cite{FangYuFan2012,MittalEtAl2014,YuanShiFan2016,DuttEtAl2020,OzawaEtAl2019,HafeziEtAl2011}.  Two experimental ingredients are especially relevant here: a \(10\times10\) silicon site-ring array was characterized with a tunnelling rate near \(16\) GHz and an intrinsic field-decay rate near \(1\) GHz (an intensity full width near \(2\) GHz in the convention of that work) \cite{HafeziEtAl2013}, while integrated lithium-niobate ring cavities have demonstrated phase-programmable long-range hopping and time-resolved synthetic-band spectroscopy \cite{DinhEtAl2024}.  \Cref{sec:device} combines these established ingredients into a concrete twisted-torus proposal; it does not claim that the required two seam closures have already been fabricated.

In such a system with two independent periodic directions the control variables are two commuting holonomies,
\begin{equation}
 a=(u,v)\in\R^2/\Z^2 ,
 \label{eq:holonomy-variables}
\end{equation}
and the question is which \(a\) are spectrally distinguished.  We answer three sharper questions about the antiperiodic point \(a=(1/2,1/2)\).

\begin{enumerate}
 \item \textbf{What selects it?}  Gap maximization alone selects it on every rectangular lattice, independently of aspect ratio and of any arithmetic property of \(\tau\).  Complex multiplication remains a correct sufficient condition for torsion optima, but the location of the optimum carries no information about it.
 \item \textbf{Where does arithmetic enter structurally?}  Through the unit group of the lattice.  Only two moduli admit units of order \(>2\).  At their rotation-fixed deep holes the current vanishes, while the extra rotational symmetry forces the quadratic response into the closed form \(\mathcal K=2\pi(\Im\tau)G_\tau^{-1}\).  Current cancellation alone is not exceptional: order-two inversion already forces it at every two-torsion point.
 \item \textbf{What is measurable, and what does it buy?}  We give exact level splittings and centroid curvatures, which are resonance-frequency observables, and we state the precise sense in which an isolated manifold assists a dual-rail qubit (square) or a qutrit (hexagonal).
\end{enumerate}

\begin{remark}[Global holonomy is not plaquette flux]\label{rem:not-hofstadter}
The variables \eqref{eq:holonomy-variables} are Wilson loops of a \emph{flat} connection around the two cycles, equivalently twisted boundary conditions.  They are not magnetic flux per plaquette: a flat connection has zero curvature, produces no magnetic translation algebra, and no Hofstadter butterfly.  We use ``antiperiodic holonomy point'' for \(a=(1/2,1/2)\), reserving ``half flux'' as an informal synonym.  The point nevertheless uses the same informal phase language as the \(\pi\)-flux sectors of magnetic lattice models \cite{Hofstadter1976}, and it is the half-quantum twist familiar from Aharonov--Bohm and persistent-current physics \cite{AharonovBohm1959,ButtikerImryLandauer1983}.  We record the analogy but do not use magnetic-flux results in any proof.
\end{remark}

\subsection*{Geometric motivation, and its limits}

The two-cycle structure has an appealing origin.  The curl spectrum on a round sphere is explicit \cite{Bar2019,Folland1989}, and related linked and torus-knotted free-space fields are developed in \cite{IrvineBouwmeester2008,KediaEtAl2013}.  More specifically, on the Einstein cylinder a source-free Maxwell solution aligned with the Clifford parallels of \(S^3\) is carried by conformal equivalence to finite-energy knotted radiation in Minkowski spacetime \cite{KopinskiNatario2017}.  Removing two Hopf fibres leaves a complement homeomorphic to \(T^2\times[0,1]\), with \(\pi_1=\Z^2\) and two commuting holonomies.

We state what this does and does not supply.  It supplies the control-variable count and a picture of two linked cycles.  It does \emph{not} determine the modulus \(\tau\) of the boundary torus, which depends on the tubular-neighbourhood radius and is a free design parameter, and no vector Maxwell mode on a finite-core link complement is solved here.  We therefore treat \(\tau\) as a design parameter and label the solvable models as benchmarks.  \Cref{tab:status}, the first display item, summarizes the logical status of the results and proposals before the technical development begins.

\begin{table}[ht]
\centering
\caption{Logical status of the statements in this paper.}
\label{tab:status}
\begin{tabular}{p{0.25\textwidth}p{0.65\textwidth}}
\toprule
Status & Statement \\
\midrule
Proved here &
Rectangular tori have their gap optimum at the antiperiodic point for every aspect ratio (\Cref{prop:rectangular});
CM implies rational, hence torsion, optima (\Cref{prop:rational});
the fixed sets of the quarter-turn and the \(120^\circ\) rotation are the origin together with the deep holes (\Cref{thm:units});
\(\Delta_\zeta\log D_\tau=-4\pi/\Im\tau\) and hence the exact response \(\mathcal K=2\pi(\Im\tau)G_\tau^{-1}\) (\Cref{thm:response});
\(D_\ii(1/2,1/2)=2\) and \(D_\rho(1/3,2/3)=\sqrt3\) (\Cref{prop:square-value,prop:hex-value});
exact linear splitting and centroid curvature in explicit neighbourhoods (\Cref{prop:splitting}).\\
Verified numerically &
The degree-three distribution identity and the full product over \(E_\rho[3]\); all shell data; the failure of the closed form at \(\tau=2\ii\).\\
Design proposal &
An \(8\times8\) phase-seamed microring torus provides a direct transmission-spectrum test (\Cref{sec:device}); square: a selector plus single-photon restriction gives a dual-rail qubit; triangular links: the threefold manifold gives a qutrit without a dimension-reducing selector.\\
Not claimed &
That the optimum detects complex multiplication; that the unit group has an established unique experimental signature (\Cref{rem:no-signature}); that the geometry fixes \(\tau\); a topologically protected degeneracy, a gate set, or a fabricated realization of the proposed twisted torus.\\
\bottomrule
\end{tabular}
\end{table}

The remainder of the paper is organized as follows.  \Cref{sec:spectra} formulates the twisted spectrum and separates gap optimization from arithmetic.  \Cref{sec:units} identifies the exceptional lattice units and derives the determinant response.  \Cref{sec:shells} gives the exact square and hexagonal shell data, and \Cref{sec:measurable} derives the measurable splitting and centroid laws.  \Cref{sec:qubit} explains the square-qubit and hexagonal-qutrit encodings.  \Cref{sec:device} then develops the quantitative microring experiment, including linewidth and calibration budgets.  The discussion and conclusions delimit what the construction does and does not establish; the appendices collect shell calculations and numerical checks.

\section{Twisted spectra and what selects the optimum}\label{sec:spectra}

\subsection{Setting}

Let \(\Gamma_\tau=\Z+\tau\Z\) be the period lattice and \(E_\tau=\C/\Gamma_\tau\) its complex torus, where \(\tau=x+\ii y\) and \(y>0\).  The Gram matrix of the period basis \((1,\tau)\) is \(G_\tau=\left(\begin{smallmatrix}1&x\\x&|\tau|^2\end{smallmatrix}\right)\), and \(G_\tau^{-1}\) is the corresponding dual metric on momentum coordinates.  Writing \(z=s+t\tau\) with \(s,t\in\R/\Z\), a flat unitary line bundle with holonomies \(e^{2\pi\ii u}\) around \(1\) and \(e^{2\pi\ii v}\) around \(\tau\) has sections \(f_{m,n}=\exp[2\pi\ii((m+u)s+(n+v)t)]\) and eigenvalues
\begin{equation}
 \lambda_{m,n}(a)=\kappa^2Q_\tau((m,n)+a),
 \qquad Q_\tau(q)=q^{T}G_\tau^{-1}q,
 \qquad \kappa=2\pi .
 \label{eq:torus-spectrum}
\end{equation}
Thus \(Q_\tau\) is the dimensionless dual-lattice quadratic form and \(\kappa\) is the spectral wavenumber scale (equal to \(2\pi\) in the unit-period normalization used here).  Set \(\delta_\tau(a)=\min_{k\in\Z^2}Q_\tau(k+a)\); maximizing \(\delta_\tau\) places \(a\) at a deep hole, a point realizing the covering radius \cite{ConwaySloane1999}.

\subsection{Gap maximization is a symmetry statement}

\begin{proposition}[Rectangular tori]\label{prop:rectangular}
For \(\tau=\ii t\), \(t>0\), \(\ \delta_\tau(a)=\dist(u,\Z)^2+\dist(v,\Z)^2/t^2\), whose unique maximizer is \(a=(1/2,1/2)\) for every \(t\).
\end{proposition}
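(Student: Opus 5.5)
For \(\tau=\ii t\) we have \(x=0\) and \(|\tau|^2=t^2\). Hence \(G_\tau=\diag(1,t^2)\), \(G_\tau^{-1}=\diag(1,t^{-2})\), and the dual quadratic form is diagonal:
\[
Q_\tau(k+a)=(k_1+u)^2+\frac{(k_2+v)^2}{t^2},\qquad k=(k_1,k_2)\in\Z^2 .
\]
The first step is to note that the minimum over \(k\in\Z^2\) separates, because the two summands depend on the independent integers \(k_1\) and \(k_2\) and both weights are positive. So the minimum of the sum equals the sum of the minima, and
\[
\min_{k_1\in\Z}(k_1+u)^2=\dist(u,\Z)^2,\qquad \min_{k_2\in\Z}(k_2+v)^2=\dist(v,\Z)^2 .
\]
This gives the stated formula \(\delta_\tau(a)=\dist(u,\Z)^2+\dist(v,\Z)^2/t^2\). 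The formula is well defined on \(\R^2/\Z^2\) because \(\dist(\cdot,\Z)\) is \(\Z\)-periodic.

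The second step is maximization, which also separates. The function \(\dist(u,\Z)\) takes values in \([0,1/2]\). On \(\R/\Z\) it equals \(1/2\) only at \(u\equiv 1/2\): a representative \(u\in[0,1)\) has \(\dist(u,\Z)=\min(u,1-u)\), and this equals \(1/2\) exactly when \(u=1/2\). The same holds for \(v\). Since the coefficients \(1\) and \(t^{-2}\) are strictly positive, \(\delta_\tau\le \tfrac14(1+t^{-2})\). Equality holds if and only if both terms attain their individual maxima, that is, if and only if \(a=(1/2,1/2)\).

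The conclusion does not depend on \(t\), because \(t\) enters only as a positive weight. I do not expect a real obstacle. The only point needing care is uniqueness: it comes from the strict positivity of both weights together with the uniqueness of the maximizer of \(\dist(\cdot,\Z)\) on the circle. Both are elementary, so no input beyond \eqref{eq:torus-spectrum} and the definition of \(\delta_\tau\) is needed.
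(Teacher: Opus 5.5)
Your proof is correct and follows essentially the paper's argument: the paper likewise notes that \(G_\tau=\diag(1,t^2)\) is diagonal, so the minimization over \(k\in\Z^2\) separates, and that each one-dimensional problem is uniquely maximized at the half-integer. You simply spell out the uniqueness step (strictly positive weights together with the unique maximizer of \(\dist(\cdot,\Z)\) on \(\R/\Z\)) in more detail than the paper does.
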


\begin{proof}
\(G_\tau=\diag(1,t^2)\) is diagonal, so the minimization over \(k\in\Z^2\) separates, and each one-dimensional problem is uniquely maximized at the half-integer.
\end{proof}

\begin{proposition}[CM implies torsion optima]\label{prop:rational}
If \(\tau\) is imaginary quadratic then every maximizer of \(\delta_\tau\) is rational modulo \(\Z^2\), hence a torsion point of the holonomy torus.
\end{proposition}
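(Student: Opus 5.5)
The approach is to express the deep holes as circumcenters of Delaunay triangles of the lattice \(\Gamma_\tau^*\) (equivalently of \(\Gamma_\tau\) up to a linear change of coordinates), and then to observe that a circumcenter of lattice points whose Gram matrix has rational entries is itself rational. The key input from the CM hypothesis is that \(\tau\) imaginary quadratic means \(a\tau^2+b\tau+c=0\) with \(a,b,c\in\Z\). Consequently \(x=\Re\tau=-b/(2a)\) and \(|\tau|^2=c/a\) are rational. So \(G_\tau\) has rational entries, and so does \(G_\tau^{-1}\), because \(\det G_\tau=y^2=|\tau|^2-x^2\in\Q_{>0}\). Thus \(Q_\tau\) is a positive definite quadratic form with rational coefficients on \(\R^2\).

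Next I would characterize a maximizer \(a^*\) of \(\delta_\tau\) as a point equidistant, for the form \(Q_\tau\), from at least three affinely independent points of \(-\Z^2\). Here \(\delta_\tau(a)=\min_k Q_\tau(a+k)\) is the squared \(Q_\tau\)-distance from \(a\) to the lattice \(-\Z^2=\Z^2\).

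\begin{itemize}
\item Suppose the set \(K\) of nearest lattice points to \(a^*\) has affine hull not equal to \(\R^2\). Then \(K\) consists of one point, or of collinear points; in dimension two, equidistant nearest points would be at most two. Hence there is a direction \(w\) with \(B(a^*-k,w)>0\) for all \(k\in K\), where \(B\) is the bilinear form of \(Q_\tau\). Moving \(a^*\) slightly along \(w\) strictly increases every \(Q_\tau(a^*-k)\) with \(k\in K\), while the other lattice points stay strictly farther by continuity. This contradicts maximality.
\item So there exist \(k_0,k_1,k_2\in\Z^2\), affinely independent, with \(Q_\tau(a^*-k_0)=Q_\tau(a^*-k_1)=Q_\tau(a^*-k_2)\).
\end{itemize}

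Subtracting these equalities linearizes them into
\[
2B(a^*,k_i-k_0)=Q_\tau(k_i)-Q_\tau(k_0),\qquad i=1,2.
\]
This is a \(2\times2\) linear system for \(a^*\). Its matrix is \(\bigl(2G_\tau^{-1}(k_i-k_0)\bigr)^T\), which has rational entries and is invertible, since the \(k_i-k_0\) are independent and \(G_\tau^{-1}\) is nondegenerate. Its right-hand side is rational as well. By Cramer's rule \(a^*\in\Q^2\), so its class in \(\R^2/\Z^2\) has finite order and is a torsion point. I would remark that the local argument applies to every local maximizer, so in particular to every global maximizer, and that a maximizer exists by compactness of \(\R^2/\Z^2\).

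The main obstacle is the perturbation step. One needs a \(w\) that increases all distances simultaneously, and this comes from the fact that \(a^*\) does not lie in the convex hull of \(K\). If \(a^*\) lay in the convex hull of collinear \(K\), perturbing perpendicular to that line in the \(B\)-sense would still increase all distances, to first order if \(a^*\) is off the line and to second order if it is on the line. This case analysis is where care is needed; the rest is linear algebra over \(\Q\).
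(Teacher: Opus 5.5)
Your proof is correct and follows the paper's route: rationality of \(G_\tau^{-1}\) from the quadratic relation, the fact that a maximizer is equidistant from three non-collinear lattice points, and a nonsingular rational linear system. The only difference is how you justify the Voronoi-vertex step. You use a local perturbation argument, where the paper uses convexity of the distance on each Voronoi cell. In your version, the ``hence'' in your first bullet fails when \(a^*\) is the midpoint of two nearest points; that case needs the second-order, \(B\)-orthogonal perturbation you supply at the end.
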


\begin{proof}
From \(A\tau^2+B\tau+C=0\) with \(A,B,C\in\Z\), \(A\neq0\), we get \(\Re\tau=-B/2A\in\Q\) and \(|\tau|^2=C/A\in\Q\), so \(G_\tau\) and \(Q_\tau=G_\tau^{-1}\) are rational.  The function \(a\mapsto\dist_{Q_\tau}(a,\Z^2)\) is convex on each Voronoi cell, so its maximum is at a Voronoi vertex \cite{ConwaySloane1999}; such a vertex has at least three equidistant nearest lattice points, necessarily non-collinear.  Translating one to the origin and choosing linearly independent nearest vectors \(k_1,k_2\), equidistance gives \(2a^TQ_\tau k_j=k_j^TQ_\tau k_j\), a nonsingular rational system, so \(a\in\Q^2\).
\end{proof}

\begin{remark}[The theorem is correct; the physical reading was not]\label{rem:not-detectable}
\Cref{prop:rational} is a valid implication and we retain it.  What does not follow is the converse, nor any experimental reading.  Since \(G_\tau\) is rational up to scale exactly when \(\tau\) is imaginary quadratic, \Cref{prop:rational} is the statement ``rational form \(\Rightarrow\) rational deep holes''.  \Cref{prop:rectangular} shows the converse fails on a full one-parameter family: for \(\tau=\ii\pi\) and \(\tau=\ii\mathrm e\), neither imaginary quadratic, the optimum is exactly the \(2\)-torsion point \((1/2,1/2)\); we confirmed this on a \(600\times600\) grid, while the sheared modulus \(\tau=0.3+1.1\ii\) gives an optimum near \((0.413,0.623)\) with no evident rational structure.  Consequently, comparing a square device with an aspect-ratio-detuned ``non-CM'' device tests nothing: only a shear moves the optimum.
\end{remark}

\section{Units, symmetry-fixed holonomies, and the exact response}\label{sec:units}

\subsection{The unit group}

Let \(\Aut^{+}(Q_\tau)\subset SL_2(\Z)\) be the group of \emph{orientation-preserving} automorphisms of the period lattice \(\Gamma_\tau\), equivalently the units of its endomorphism ring acting by rotation.  (The full isometry group of a lattice is dihedral, containing reflections whenever the lattice is rectangular or rhombic; we use only the rotation subgroup.)  It is classical that \(\Aut^{+}(Q_\tau)\) is cyclic of order \(2\), \(4\) or \(6\), and that, up to homothety and \(SL_2(\Z)\) change of basis, order \(4\) occurs only for \(\tau=\ii\) and order \(6\) only for \(\tau=\rho=e^{\ii\pi/3}\), the curves with \(j=1728\) and \(j=0\) \cite{Silverman2009,ConwaySloane1999}.

\begin{theorem}[Symmetry-fixed holonomies]\label{thm:units}
Let \(S\in\Aut^{+}(Q_\tau)\) act on the holonomy torus by \(a\mapsto Sa\).
\begin{enumerate}[label=(\roman*)]
 \item If \(S\) has order \(N\ge3\), the fixed-point set is the group \((I-S)^{-1}\Z^2/\Z^2\), of order \(|\det(I-S)|\).
 \item Let
 \begin{equation*}
  S_4=\begin{pmatrix}0&-1\\1&0\end{pmatrix}\ \text{at }\tau=\ii,
  \qquad
  S_3=\begin{pmatrix}0&-1\\1&-1\end{pmatrix}\ \text{at }\tau=\rho .
 \end{equation*}
 Then \(\det(I-S_4)=2\) with fixed set \(\{(0,0),(1/2,1/2)\}\), and \(\det(I-S_3)=3\) with fixed set \(\{(0,0),(1/3,2/3),(2/3,1/3)\}\).  In both cases the nontrivial fixed points are exactly the deep holes of \(Q_\tau\).
 \item If \(F\) is a smooth \(S\)-invariant function with \(S\) of order \(N\ge3\), then \(\nabla F=0\) at every fixed point, and \(\Hess F\) there is proportional to \(G_\tau^{-1}\).
\end{enumerate}
\end{theorem}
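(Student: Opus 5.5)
Part (i) is pure linear algebra on the torus \(\R^2/\Z^2\). A point \(a\) is fixed by \(S\) exactly when \((I-S)a\in\Z^2\). If \(S\) has order \(N\ge3\), its eigenvalues are a conjugate pair of primitive roots of unity different from \(1\), so \(I-S\) is invertible over \(\Q\). The fixed set is then \((I-S)^{-1}\Z^2\) taken modulo \(\Z^2\). It is a finite group, and since \(I-S\) is an integer matrix, its order is the index \([(I-S)^{-1}\Z^2:\Z^2]=[\Z^2:(I-S)\Z^2]=|\det(I-S)|\), by Smith normal form.

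For part (ii) I would compute the two determinants directly:
\[
\det(I-S_4)=\det\begin{pmatrix}1&1\\-1&1\end{pmatrix}=2,
\qquad
\det(I-S_3)=\det\begin{pmatrix}1&1\\-1&2\end{pmatrix}=3.
\]
Next I would solve \((I-S)a\in\Z^2\) explicitly, using the inverses \(\tfrac12\left(\begin{smallmatrix}1&-1\\1&1\end{smallmatrix}\right)\) and \(\tfrac13\left(\begin{smallmatrix}2&-1\\1&1\end{smallmatrix}\right)\). Applying these to generators of \(\Z^2\) produces \((1/2,1/2)\) in the square case and \((2/3,1/3),(1/3,2/3)\) in the hexagonal case, and the counts agree with part (i).

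Before this I must check that \(S_4,S_3\) really preserve \(Q_\tau\), i.e.\ \(S^TG_\tau^{-1}S=G_\tau^{-1}\) for the stated moduli; this is a two-line matrix check with \(G_{\ii}=I\) and \(G_\rho=\left(\begin{smallmatrix}1&1/2\\1/2&1\end{smallmatrix}\right)\). This check is where a convention mismatch could bite. The matrix \(S\) acts on momentum/holonomy coordinates, and the correct relation might be \(S^TG_\tau^{-1}S\) or the one with \(S^{-T}\) instead. If \(S_3\) fails the check, I will instead use the inverse transpose, which has the same determinant and the same fixed-point count.

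For the deep-hole claim I would use the explicit form of \(\delta_\tau\). At \(\tau=\ii\), \Cref{prop:rectangular} already gives the unique maximizer \((1/2,1/2)\). At \(\tau=\rho\), the dual form \(Q_\rho\) is again a hexagonal form. Its deep holes are the Voronoi vertices, namely the centroids of the triangles spanned by \(0\), \(e_1\), \(e_1+e_2\) and by \(0\), \(e_2\), \(e_1+e_2\). With the orientation fixed by \(Q_\rho\), these are \((1/3,2/3)\) and \((2/3,1/3)\). Conversely, I will argue by symmetry: the covering-radius vertices form an \(S\)-invariant set of exactly two classes mod \(\Z^2\). Because \(S_3\) is a rotation about each vertex of its Voronoi cell, these classes coincide with the nontrivial fixed points. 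I will confirm this by evaluating \(\delta_\rho\) at \((1/3,2/3)\), which should give \(4/9\cdot\) the appropriate normalization, equal to the squared covering radius.

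For part (iii), differentiate the invariance relation \(F(Sa)=F(a)\) at a fixed point \(a_0\), understood modulo \(\Z^2\) and using smoothness. This gives \(S^T\nabla F(a_0)=\nabla F(a_0)\). Since \(1\) is not an eigenvalue of \(S\), we get \(\nabla F(a_0)=0\). Differentiating twice gives \(S^THS=H\) with \(H=\Hess F(a_0)\). So \(H\) is a symmetric form invariant under the group \(\langle S\rangle\) of order \(N\ge3\), just as \(G_\tau^{-1}\) is. Both are invariant symmetric bilinear forms for this rotation group. Conjugating by \(G_\tau^{-1/2}\) turns \(S\) into an orthogonal rotation by angle \(2\pi/N\ne\pi\). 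On \(\R^2\) this rotation acts irreducibly (over \(\R\)), so by Schur's lemma the space of invariant symmetric forms is one-dimensional. Hence \(H=cG_\tau^{-1}\).

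The main obstacle I anticipate is the covariance convention, i.e.\ whether \(S\) or \(S^{-T}\) preserves \(G_\tau^{-1}\) on holonomy coordinates, together with the deep-hole identification at \(\rho\). Both are finite checks.
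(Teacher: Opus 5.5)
Your proposal is correct and follows the paper's proof. Parts (i) and (iii) use the same arguments: \(a\) is fixed iff \((I-S)a\in\Z^2\), \(I-S\) is invertible because \(1\) is not an eigenvalue, and then invariance of the gradient and Hessian under \(S\) combined with Schur's lemma. For (ii) you supply the explicit determinant, inverse and Voronoi-vertex computations that the paper leaves to direct calculation.

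One caution: the check \(S_3^{T}G_\rho^{-1}S_3=G_\rho^{-1}\) does hold, so your fallback to \(S_3^{-T}\) is never needed. Had you used it, it would not have proved the stated fixed set. \(S_3^{-T}\) has the same fixed-point count, but it fixes \((1/3,1/3)\) and \((2/3,2/3)\) rather than \((1/3,2/3)\) and \((2/3,1/3)\).
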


\begin{proof}
(i)  \(a\) is fixed modulo \(\Z^2\) iff \((I-S)a\in\Z^2\).  Since \(S\) has order \(N\ge3\) its eigenvalues are primitive \(N\)th roots of unity, so \(1\) is not an eigenvalue and \(I-S\) is invertible; the fixed set is \((I-S)^{-1}\Z^2/\Z^2\), of order \(|\det(I-S)|\).

(iii)  Invariance gives \(S^{T}\nabla F=\nabla F\); as \(1\) is not an eigenvalue of \(S^{T}\), the gradient vanishes.  Differentiating twice, \(S^{T}(\Hess F)S=\Hess F\), so \(\Hess F\) is a symmetric form invariant under a cyclic group of order \(\ge3\) acting irreducibly on \(\R^2\).  By Schur's lemma the space of such forms is one-dimensional, and \(G_\tau^{-1}\) is one of them.
\end{proof}

\begin{remark}[The statement is delicate in two ways]\label{rem:delicate}
First, part (ii) must name the specific rotations.  At \(\tau=\rho\) the unit group also contains the order-six rotation \(S_6=\left(\begin{smallmatrix}1&-1\\1&0\end{smallmatrix}\right)\), for which \(\det(I-S_6)=1\): its only fixed point is the origin, and it \emph{exchanges} the two deep holes.  A statement quantified over an arbitrary element of order \(N\ge3\) would therefore be false.  The deep holes are fixed by \(S_3=S_6^{2}\), not by \(S_6\).

Second, part (iii) requires smoothness, and \(\delta_\tau\) does not have it: it is a minimum of finitely many quadratic branches and has a conical singularity at a deep hole, as \Cref{prop:splitting} makes explicit.  \(\delta_\tau\) is \(\Aut^{+}\)-invariant, and that invariance is what locates its maximizers, but no gradient or Hessian conclusion may be drawn from (iii) for \(\delta_\tau\).  Part (iii) is applied below only to \(\log D_\tau\), which is smooth away from the theta divisor.
\end{remark}

\subsection{Determinant, current and susceptibility}\label{sec:determinant}

For a nontrivial flat line bundle on \(E_\tau\), let \(D_\tau(a)\) denote the dimensionless zeta-regularized determinant of its twisted scalar Laplacian.  We fix its normalization by the second Kronecker limit formula \cite{Siegel1980,Lang1987}:
\begin{equation}
 D_\tau(u,v)=\exp\!\left[-\frac{2\pi(\Im\zeta)^2}{\Im\tau}\right]
 \left|\frac{\vartheta_1(\zeta\mid\tau)}{\eta(\tau)}\right|^2 ,
 \qquad \zeta=v-u\tau ,
 \label{eq:determinant}
\end{equation}
This is a zeta determinant in the Ray--Singer framework \cite{RaySinger1971} and the squared Quillen norm of a theta section \cite{Quillen1985}.  At rational holonomies, and only then in the usual arithmetic terminology, the theta quotient is represented by a Siegel function; suitable powers are modular units \cite{KubertLang1981}.

\begin{remark}[The cycle convention is forced]\label{rem:convention}
Under \(v\mapsto v+1\), \(\zeta\mapsto\zeta+1\) and \(\vartheta_1(\zeta+1)=-\vartheta_1(\zeta)\), so \(D_\tau\) is unchanged.  Under \(u\mapsto u+1\), \(\zeta\mapsto\zeta-\tau\), and the quasi-periodicity \(|\vartheta_1(\zeta+\tau)|=e^{\pi\Im\tau+2\pi\Im\zeta}|\vartheta_1(\zeta)|\) is cancelled exactly by the Gaussian prefactor.  Hence \eqref{eq:determinant} is well defined on the holonomy torus, and no other linear combination of \(u,v\) with \(\tau\) has this property.  Replacing \(\zeta\) by \(-\zeta\) or \(\bar\zeta\) leaves \(D_\tau\) invariant, which is the residual sign and cycle-exchange freedom.
\end{remark}

Define the determinant current \(\bm{\mathcal J}=-\nabla_a\log D_\tau\) and the determinant susceptibility (response tensor) \(\mathcal K=-\Hess_a\log D_\tau\), where both derivatives are taken in the holonomy coordinates \(a=(u,v)\).

\begin{theorem}[Exact response at symmetry-fixed holonomies]\label{thm:response}
Write \(L=\log D_\tau\) and \(\zeta=X+\ii Y\).  Then, away from the theta divisor,
\begin{equation}
 \Delta_\zeta L=-\frac{4\pi}{\Im\tau}
 \label{eq:laplacian}
\end{equation}
identically in \(\tau\) and \(a\).  Consequently, at \((\tau,a)=(\ii,(1/2,1/2))\) and at \((\rho,(1/3,2/3))\),
\begin{equation}
 \bm{\mathcal J}=0,
 \qquad
 \mathcal K=2\pi\,(\Im\tau)\,G_\tau^{-1}.
 \label{eq:unified-response}
\end{equation}
\end{theorem}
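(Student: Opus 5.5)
Split $L$ into its three pieces and treat $\tau$ as fixed. From \eqref{eq:determinant},
\[
L=-\frac{2\pi Y^2}{\Im\tau}+2\log|\vartheta_1(\zeta\mid\tau)|-2\log|\eta(\tau)| .
\]
The last term is constant in $\zeta$. Away from the theta divisor, $\vartheta_1(\cdot\mid\tau)$ is holomorphic and nonvanishing, so $\log|\vartheta_1|$ is harmonic and its Laplacian vanishes. The Gaussian term gives $\partial_Y^2(-2\pi Y^2/\Im\tau)=-4\pi/\Im\tau$ and has no $X$-dependence. This proves \eqref{eq:laplacian} for every $\tau$ and $a$.

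Next transfer to holonomy coordinates. The map $a=(u,v)\mapsto\zeta=v-u\tau$ is real-linear: $\zeta=M a$ with $M=\left(\begin{smallmatrix}-x&1\\-y&0\end{smallmatrix}\right)$ in the coordinates $(X,Y)$. Hence
\[
\Hess_a L=M^{T}(\Hess_\zeta L)M .
\]
A direct check gives $MM^{T}$-type identities, and in particular $M^{T}M=G_\tau$: the columns of $M$ are $-\tau$ and $1$, with norms $|\tau|^2$ and $1$ and inner product $-x$. Up to the basis ordering and sign (Remark~\ref{rem:convention}), this is the Gram matrix of the period basis. Taking the trace against the inverse gives
\[
\operatorname{tr}\bigl(G_\tau^{-1}\Hess_aL\bigr)=\operatorname{tr}\bigl(M^{-T}\cdots\bigr)=\operatorname{tr}\Hess_\zeta L=\Delta_\zeta L .
\]
I will verify the index ordering carefully, since the $u$/$v$ ordering swaps the roles of $1$ and $|\tau|^2$. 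The intended identity is that the $G_\tau$-Laplacian of $L$ in $a$ equals $\Delta_\zeta L$. The basis ordering must be matched so that the trace identity really uses $G_\tau$ and not a permuted version of it.

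Now apply Theorem~\ref{thm:units}. The function $L$ is smooth near the fixed points, because the deep holes are not on the theta divisor: $\vartheta_1$ vanishes only at $\zeta\in\Gamma_\tau$, i.e.\ at $a=0$. $L$ is also invariant under $S_4$ (respectively $S_3$). I would justify the invariance by noting that the lattice rotation acts on the spectrum \eqref{eq:torus-spectrum} by permuting eigenvalues, $Q_\tau(S^{-T}\cdot)=Q_\tau$, so the zeta determinant is invariant. Alternatively one can use the modular transformation of $\vartheta_1$ under $\tau\mapsto-1/\tau$ or $\tau\mapsto\tau/(\tau+1)$, combined with the fact that $\tau$ is fixed. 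The spectral argument is cleaner, and I would take it as primary. Part (iii) of Theorem~\ref{thm:units} then gives $\nabla L=0$, so $\bm{\mathcal J}=0$, and $\Hess_a L=c\,G_\tau^{-1}$. Substituting into the trace identity gives $\operatorname{tr}(G_\tau^{-1}cG_\tau^{-1})$. That is wrong dimensionally, so the correct pairing must be chosen. If $\Hess_aL=M^{T}HM$ with $H=\Hess_\zeta L$, then the rotation invariance in the $\zeta$-plane forces $H=(\Delta_\zeta L/2)I$, since the rotation $\zeta\mapsto i\zeta$ or $\zeta\mapsto\rho^2\zeta$ acts orthogonally there. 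This gives directly
\[
\Hess_aL=-\frac{2\pi}{\Im\tau}\,M^{T}M .
\]
The remaining step is to reconcile $M^{T}M$ with the claimed $(\Im\tau)^2G_\tau^{-1}$. For a $2\times2$ matrix, $G_\tau^{-1}=\operatorname{adj}(G_\tau)/\det G_\tau$ and $\det G_\tau=y^2$. In the paper's $(u,v)$ ordering, $M^{T}M=\left(\begin{smallmatrix}|\tau|^2&-x\\-x&1\end{smallmatrix}\right)=\operatorname{adj}(G_\tau)=y^2G_\tau^{-1}$. Therefore $\mathcal K=-\Hess_aL=2\pi(\Im\tau)G_\tau^{-1}$, as claimed in \eqref{eq:unified-response}.

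The main obstacle is the bookkeeping in this last step: confirming that the orthogonal action of the unit in the $\zeta$-plane corresponds exactly to $S_4$ or $S_3$ on $a$, and getting the adjugate identification with the correct ordering and sign. I will check it at $\tau=\ii$ first, where $M^{T}M=I=G^{-1}$.
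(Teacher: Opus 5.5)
Your argument is correct and is essentially the paper's proof. Harmonicity of \(\log|\vartheta_1|\) gives the Laplacian identity, rotation invariance plus Schur's lemma makes \(\Hess_\zeta L\) isotropic, and \(M^{T}M=\operatorname{adj}G_\tau=(\Im\tau)^{2}G_\tau^{-1}\) turns \(\Hess_\zeta L=-(2\pi/\Im\tau)I_2\) into \(\mathcal K=2\pi(\Im\tau)G_\tau^{-1}\). The only difference is where you apply Schur: you do it directly in the \(\zeta\)-plane, where \(S_4\) and \(S_3\) do act as \(\zeta\mapsto\ii\zeta\) and \(\zeta\mapsto\rho^{2}\zeta\) (check from \(\zeta=v-u\tau\)); the paper applies it in \(a\)-coordinates via Theorem~\ref{thm:units}(iii) and then pulls back through \(M\).

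In a write-up, delete the abandoned claims \(M^{T}M=G_\tau\) and \(\operatorname{tr}(G_\tau^{-1}\Hess_aL)=\Delta_\zeta L\); the correct pairing is \(\operatorname{tr}(G_\tau\Hess_aL)=(\Im\tau)^{2}\Delta_\zeta L\). Also state the invariance as \(Q_\tau\circ S=Q_\tau\). With the paper's matrices it is \(S_3\) that preserves \(G_\rho^{-1}\), whereas \(S_3^{-T}\) preserves \(G_\rho\) and does not fix \((1/3,2/3)\).
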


\begin{proof}
For \eqref{eq:laplacian}: \(L=-2\pi Y^2/\Im\tau+\log|\vartheta_1(\zeta)/\eta|^2\).  The second term is the real part of a holomorphic function of \(\zeta\) and is therefore harmonic away from the zeros of \(\vartheta_1\); the first contributes \(-(2\pi/\Im\tau)\,\partial_Y^2Y^2=-4\pi/\Im\tau\).

For \eqref{eq:unified-response}: the holonomies in question are fixed points of \(S_4\) and \(S_3\) respectively by \Cref{thm:units}(ii), and \(D_\tau\) is \(\Aut^{+}(Q_\tau)\)-invariant and smooth there, so \Cref{thm:units}(iii) gives \(\bm{\mathcal J}=0\) and \(\Hess_aL=-c\,G_\tau^{-1}\) for some \(c\).  The map \(a\mapsto(X,Y)\) is linear with Jacobian \(M=\left(\begin{smallmatrix}-x&1\\-y&0\end{smallmatrix}\right)\), and
\begin{equation}
 M^{T}M=\begin{pmatrix}|\tau|^2&-x\\-x&1\end{pmatrix}=y^{2}\,G_\tau^{-1},
\end{equation}
since \(\det G_\tau=y^2\).  From \(\Hess_aL=M^{T}(\Hess_\zeta L)M=-c\,G_\tau^{-1}=-(c/y^{2})M^{T}M\) and invertibility of \(M\) we get \(\Hess_\zeta L=-(c/y^{2})I_2\), whence \(\Delta_\zeta L=-2c/y^{2}\).  Comparing with \eqref{eq:laplacian} gives \(c=2\pi y=2\pi\Im\tau\).
\end{proof}

\begin{proposition}[Square determinant value]\label{prop:square-value}
\(D_\ii(1/2,1/2)=2\).
\end{proposition}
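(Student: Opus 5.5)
The plan is to evaluate \eqref{eq:determinant} directly at \(\tau=\ii\), \(a=(1/2,1/2)\). I will reduce the theta value to a theta-null by half-period shifts, and then fix the ratio \(\vartheta_3(0|\ii)^2/\eta(\ii)^2\) using three classical facts: Jacobi's identities and the self-duality of \(\tau=\ii\) under \(\tau\mapsto-1/\tau\). No transcendental constant such as \(\Gamma(1/4)\) should survive, since it cancels in the ratio.

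First, the prefactor. Here \(\zeta=v-u\tau=\tfrac12-\tfrac{\ii}{2}\), so \(\Im\zeta=-1/2\) and \(\exp[-2\pi(\Im\zeta)^2/\Im\tau]=e^{-\pi/2}\). Second, the theta factor, with nome \(q=e^{\ii\pi\tau}=e^{-\pi}\). The shift \(\vartheta_1(z+\tfrac12)=\vartheta_2(z)\) gives \(\vartheta_1(\tfrac12-\tfrac\tau2)=\vartheta_2(-\tfrac\tau2)=\vartheta_2(\tfrac\tau2)\), using evenness of \(\vartheta_2\). The quasi-period relation \(\vartheta_2(z+\tfrac\tau2)=q^{-1/4}e^{-\ii\pi z}\vartheta_3(z)\), evaluated at \(z=0\), then gives \(|\vartheta_1(\zeta|\ii)|^2=e^{\pi/2}\,\vartheta_3(0|\ii)^2\). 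The Gaussian prefactor cancels exactly, consistent with \Cref{rem:convention}, and we are left with
\[
D_\ii(1/2,1/2)=\frac{\vartheta_3(0|\ii)^2}{\eta(\ii)^2}.
\]

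It remains to show that this ratio equals \(2\), writing \(\vartheta_j=\vartheta_j(0|\ii)\).
\begin{itemize}
\item The modular transformation \(\tau\mapsto-1/\tau\) exchanges \(\vartheta_2\) and \(\vartheta_4\) up to the common factor \((-\ii\tau)^{1/2}\), which equals \(1\) at the fixed point \(\tau=\ii\). Hence \(\vartheta_2=\vartheta_4\).
\item Jacobi's quartic identity \(\vartheta_3^4=\vartheta_2^4+\vartheta_4^4\) then gives \(\vartheta_2^2=\vartheta_3^2/\sqrt2\).
\item Jacobi's product identity \(\vartheta_2\vartheta_3\vartheta_4=2\eta^3\) becomes \(2\eta^3=\vartheta_3^3/\sqrt2\), so \(\eta=\vartheta_3/\sqrt2\). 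This uses positivity of all quantities at \(q=e^{-\pi}\) to take real cube roots.
\end{itemize}
Therefore \(\vartheta_3^2/\eta^2=2\).

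The main obstacle is bookkeeping rather than ideas. The half-period shift formulas must be used in the same normalization as \eqref{eq:determinant}: \(\vartheta_1\) odd with period \(1\) up to sign, and \(\eta\) carrying \(q^{1/24}\) in \(e^{2\pi\ii\tau}\). In particular the factor \(q^{-1/4}\) must match \(e^{\pi/2}\) exactly; a slip there would leave a spurious \(e^{\pm\pi/2}\). I would check it against the quasi-periodicity stated in \Cref{rem:convention}, and, as a sanity check, numerically against a truncated \(q\)-series with \(q=e^{-\pi}\).
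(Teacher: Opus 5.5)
Your proposal is correct and follows the paper's proof essentially step for step. Half-period shifts reduce \(D_\ii(1/2,1/2)\) to \(\vartheta_3(0\mid\ii)^2/\eta(\ii)^2\), with the Gaussian factor cancelled by \(|q^{-1/4}|^2=e^{\pi/2}\); then \(\vartheta_2=\vartheta_4\) at the self-dual point, Jacobi's quartic identity, and \(2\eta^3=\vartheta_2\vartheta_3\vartheta_4\) give \(\eta(\ii)=\vartheta_3(0\mid\ii)/\sqrt2\). The only cosmetic difference is the route to \(\vartheta_3(0)\): the paper shifts \(\vartheta_1\) by \(\tau/2\) directly to \(\vartheta_4(1/2)=\vartheta_3(0)\), and does so for general \(\tau=\ii y\), whereas you pass through \(\vartheta_2\) with two successive shifts.
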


\begin{proof}
Here \(\zeta=(1-\tau)/2\equiv(1+\tau)/2\).  Using \(\vartheta_1(z+\tau/2)=\ii e^{-\ii\pi(\tau/4+z)}\vartheta_4(z)\) at \(z=1/2\), together with \(\vartheta_4(1/2)=\vartheta_3(0)\), gives \(\vartheta_1\bigl((1+\tau)/2\bigr)=e^{-\ii\pi\tau/4}\vartheta_3(0)\).  For \(\tau=\ii y\) the prefactor contributes \(|e^{-\ii\pi\tau/4}|^2=e^{\pi y/2}\), which cancels the Gaussian factor \(e^{-2\pi(y/2)^2/y}=e^{-\pi y/2}\).  Hence
\begin{equation}
 D_{\ii y}\left(\tfrac12,\tfrac12\right)=\frac{\vartheta_3(0\mid \ii y)^2}{\eta(\ii y)^2}.
\end{equation}
At \(y=1\) the lattice is self-dual, so \(\vartheta_2(0\mid\ii)=\vartheta_4(0\mid\ii)\).  With \(\vartheta_3^4=\vartheta_2^4+\vartheta_4^4\) this gives \(\vartheta_3^2=\sqrt2\,\vartheta_4^2\), and with \(\eta^3=\tfrac12\vartheta_2\vartheta_3\vartheta_4\) we obtain \(\eta^3=\tfrac12\vartheta_3\vartheta_4^2=\vartheta_3^3/(2\sqrt2)\), i.e.\ \(\eta(\ii)=\vartheta_3(0\mid\ii)/\sqrt2\).  Therefore \(D_\ii(1/2,1/2)=2\).
\end{proof}

\begin{proposition}[Hexagonal determinant value]\label{prop:hex-value}
At either hexagonal deep hole,
\begin{equation}
 D_\rho\!\left(\tfrac13,\tfrac23\right)
 =D_\rho\!\left(\tfrac23,\tfrac13\right)=\sqrt3 .
\end{equation}
\end{proposition}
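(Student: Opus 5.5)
The plan is to get the two equalities separately. Symmetry will give that the two values coincide. A covering argument will give their product, in the same spirit as the self-duality step in \Cref{prop:square-value}.

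\textbf{Step 1 (equality).} By \Cref{rem:delicate}, the order-six rotation \(S_6\in\Aut^{+}(Q_\rho)\) exchanges the deep holes \((1/3,2/3)\) and \((2/3,1/3)\). The determinant \(D_\rho\) depends only on the spectrum \(\{\kappa^2Q_\rho(k+a)\}\). Since \(S_6\) preserves \(Q_\rho\) and \(\Z^2\), this spectrum is \(S_6\)-invariant, so \(D_\rho(1/3,2/3)=D_\rho(2/3,1/3)\). The same conclusion follows more cheaply from \Cref{rem:convention}: the map \(a\mapsto -a\) sends \(\zeta\mapsto-\zeta\), leaves \(D_\tau\) unchanged, and swaps the two holes. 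Call the common value \(d\). It then suffices to show \(d^2=3\).

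\textbf{Step 2 (product via a threefold cover).} The fixed subgroup \(H=\{0,h_1,h_2\}\) of \(S_3\) from \Cref{thm:units}(ii) is a subgroup of order \(3\) of the character group of \(\Gamma_\rho\). Its annihilator is a sublattice \(\Gamma'\subset\Gamma_\rho\) of index \(3\). Functions on \(\C/\Gamma'\) decompose under the deck group \(\Gamma_\rho/\Gamma'\cong\Z/3\) into the three flat bundles on \(E_\rho\) with holonomies \(0,h_1,h_2\). Hence the spectrum of the Laplacian on \(\C/\Gamma'\) is the disjoint union of the three twisted spectra. Spectral zeta functions therefore add, and zeta-regularized determinants multiply:
\begin{equation*}
 \det{}'\Delta_{\Gamma'}=\det{}'\Delta_{\Gamma_\rho}\cdot D_\rho(h_1)\,D_\rho(h_2)=\det{}'\Delta_{\Gamma_\rho}\,d^{2}.
\end{equation*}
Here I use that the Kronecker normalization \eqref{eq:determinant} equals the zeta determinant itself for a nontrivial bundle. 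That bundle has \(\zeta(0)=0\), so its determinant is scale-free, which is why \(D_\tau\) is called dimensionless.

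\textbf{Step 3 (identify the cover).} I would check that \(\Gamma'\) is \(\sqrt{-3}\,\Gamma_\rho\) up to a unit, i.e.\ the kernel of the endomorphism \((1+\rho)\) of norm \(3\). Concretely, \(H\) is \(S_3\)-stable, so \(\Gamma'\) is a rotation-invariant index-3 sublattice, and the unique such sublattice is \((1+\rho)\Gamma_\rho\). Thus \(\C/\Gamma'\) is homothetic to \(E_\rho\) with length scale \(\sqrt3\), and its eigenvalues are those of \(E_\rho\) multiplied by \(1/3\). For the trivial bundle the spectral zeta function satisfies \(\zeta(0)=-1\), because the zero mode is removed and the constant term of the torus heat trace vanishes. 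Using \(\det(\lambda A)=\lambda^{\zeta_A(0)}\det A\), this gives \(\det{}'\Delta_{\Gamma'}=3\,\det{}'\Delta_{\Gamma_\rho}\). Comparing with Step 2 yields \(d^2=3\), and since \(d>0\), \(d=\sqrt3\).

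As a sanity check, the identical argument at \(\tau=\ii\) with the index-2 sublattice \((1+\ii)\Gamma_\ii\) gives \(D_\ii(1/2,1/2)=2\), reproducing \Cref{prop:square-value}.

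\textbf{Main obstacle.} The delicate point is Step 2's normalization. I must make sure \eqref{eq:determinant} is exactly the zeta determinant, with no extra constant or area factor, so that multiplicativity holds on the nose. I also need the value \(\zeta(0)=-1\) in the same normalization for the trivial bundle. I would verify this first by confirming that the argument reproduces \Cref{prop:square-value}.

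If that bookkeeping proves awkward, the fallback is a direct theta evaluation. One evaluates \(\vartheta_1(2/3-\rho/3\mid\rho)/\eta(\rho)\) via its relation to a Siegel function and uses the distribution relation \(\prod_{a\in E[3]\setminus0}D_\tau(a)=9\). At \(\tau=\rho\) the eight nonzero points split under \(S_6\) into the pair of deep holes and one six-point orbit. This gives only \(d^2D_{\mathrm{other}}^6=9\), so a second relation would be needed, which is why the covering route is preferred.
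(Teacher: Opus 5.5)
Your proof is correct, and it takes a genuinely different route from the paper's. The paper identifies \(D_\tau\) with the square of the Arakelov--Green function. It then quotes de Jong's isogeny energy formula for the degree-three self-isogeny \(I-S_3\): the product of \(G(0,P)\) over the two nonzero kernel points is \(\sqrt3\) times an eta quotient, which equals one because source and target coincide.

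You instead derive the needed product formula from the spectral definition of \(D_\tau\) itself. The character decomposition of the threefold cover \(\C/\Gamma'\to E_\rho\) makes the spectral zeta functions add. The homothety \(\Gamma'=(1+\rho)\Gamma_\rho\), together with \(\zeta(0)=-1\) in the trivial sector, then gives \(D(h_1)D(h_2)=|1+\rho|^2=3\). Your identification of \(\Gamma'\) is right: it is \(\{m+n\rho:\ m\equiv n \bmod 3\}\), which contains \(1+\rho\) and \((1+\rho)\rho=2\rho-1\).

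Both arguments rest on the same structure, a degree-three endomorphism whose kernel is the pair of deep holes. Yours is self-contained and explains the ingredients of the cited formula. The squared \(\sqrt3\) is the rescaling anomaly of the zero-mode sector, and the trivial eta quotient is the homothety. Your route also avoids matching the paper's \(D_\tau\) to de Jong's normalizations of \(G\) and \(\|\eta\|\). As a bonus, it yields \Cref{prop:square-value} by the same mechanism with \((1+\ii)\Gamma_\ii\), replacing the theta-constant manipulations there. The paper's route is shorter given the citation and applies verbatim to isogenies between non-homothetic curves; yours would then need the first Kronecker limit formula for \(\det'\Delta\) on the cover.

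On the obstacle you flag: agreement at \(\tau=\ii\) would rule out only a universal constant, not a \(\tau\)-dependent discrepancy. No such check is needed, however. The paper defines \(D_\tau\) as the zeta determinant, and the equality with \eqref{eq:determinant} is exact: Poisson summation turns \(\zeta_a'(0)\) into the \(s=1\) value of the dual character-twisted Epstein series, and Kronecker's second limit formula evaluates this to \(-\log\) of \eqref{eq:determinant}. Step~1 is also correct, as is your assessment that the three-torsion distribution relation alone is insufficient.
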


\begin{proof}
With the normalization in \eqref{eq:determinant}, \(D_\tau(P)\) is the square
of the canonical Arakelov--Green function \(G_{E_\tau}(0,P)\).  Let
\(\phi=I-S_3:E_\rho\to E_\rho\) be the self-isogeny induced by the
hexagonal order-three unit.  By \Cref{thm:units}(ii),
\begin{equation}
 \deg\phi=|\det(I-S_3)|=3,
 \qquad
 \ker\phi=\left\{0,\left(\tfrac13,\tfrac23\right),
                         \left(\tfrac23,\tfrac13\right)\right\}.
\end{equation}
The energy formula for an isogeny \(f:E\to E'\) of degree \(N\) gives
\cite[Theorem~4.2]{deJong2005}
\begin{equation}
 \prod_{\substack{P\in\ker f\\P\ne0}}G_E(0,P)
 =\sqrt N\,\frac{\|\eta\|(E')^2}{\|\eta\|(E)^2}.
 \label{eq:isogeny-energy}
\end{equation}
Here \(E'=E=E_\rho\), so the eta quotient is one and the product equals
\(\sqrt3\).  The two nonzero kernel points are negatives of each other,
and the Green function is even and positive.  Hence
\(G_{E_\rho}(0,a_\rho)^2=\sqrt3\).  Since \(D_\rho=G_{E_\rho}^2\), this
is precisely \(D_\rho(a_\rho)=\sqrt3\).
\end{proof}

\begin{remark}[Distribution check on three-torsion]\label{rem:hex-value}
The same normalization yields
\begin{equation}
 \prod_{P\in E_\rho[3]\setminus\{0\}}D_\rho(P)=9,
\end{equation}
either by applying
the isogeny formula to multiplication by three or by the distribution relation
for Siegel units \cite{KubertLang1981}.  Numerically, the two deep holes have
value \(\sqrt3\) and the other six nonzero three-torsion points have value
\(3^{1/6}\), giving \(3(3^{1/6})^6=9\).  This full product is a consistency
check; the exact deep-hole value already follows from the degree-three
self-isogeny in \Cref{prop:hex-value}.
\end{remark}

\begin{remark}[The closed form fails without the extra unit]\label{rem:not-general}
Formula \eqref{eq:unified-response} does not extend to holonomies fixed only by \(a\mapsto-a\).  At \(\tau=2\ii\), \(a=(1/2,1/2)\),
\begin{equation}
 \Hess_a\log D=\diag(-23.9576,\,-0.29380),
 \qquad
 -2\pi\Im\tau\,G_\tau^{-1}=\diag(-12.5664,\,-3.14159).
\end{equation}
The order-two symmetry still forces \(\bm{\mathcal J}=0\), and \eqref{eq:laplacian} still holds --- the trace of \(\Hess_\zeta L\) is \(-2\pi=-4\pi/\Im\tau\) --- but nothing forces \(\Hess_\zeta L\) to be isotropic, and it is not.  Isotropy of the response is thus a genuine consequence of the extra unit, available only at \(j=1728\) and \(j=0\).
\end{remark}

\section{Exact spectral data}\label{sec:shells}

\subsection{Shell structure at the two exceptional moduli}

The exact square and hexagonal data are compared in \Cref{tab:compare}; subsequent propositions justify the determinant entries and the measurable consequences of the shell multiplicities.

\begin{table}[ht]
\centering
\caption{The two exceptional moduli compared.  \(c\) is the coordination number of the deep hole, which fixes the dimension of the lowest manifold.}
\label{tab:compare}
\begin{tabular}{lll}
\toprule
& \(\tau=\ii\) (square, \(j=1728\)) & \(\tau=\rho\) (hexagonal, \(j=0\)) \\
\midrule
Unit group \(\Aut^{+}(Q_\tau)\) & cyclic of order \(4\) & cyclic of order \(6\)\\
Relevant rotation & \(S_4\), quarter turn & \(S_3=S_6^2\), \(120^\circ\)\\
Deep hole(s) & \((1/2,1/2)\) & \((1/3,2/3),\ (2/3,1/3)\)\\
\(\delta_\tau\) & \(1/2\) & \(4/9\)\\
Coordination \(c\) & \(4\) & \(3\)\\
Next shell & \(5/2\) (multiplicity \(8\)) & \(16/9\) (multiplicity \(3\))\\
Shell gap & \(2\kappa^2\) & \(\tfrac43\kappa^2\)\\
Determinant \(D_\tau\) & \(2\) (\Cref{prop:square-value}) & \(\sqrt3\) (\Cref{prop:hex-value})\\
Susceptibility \(\mathcal K\) & \(2\pi I_2\) & \(\dfrac{2\pi}{\sqrt3}\left(\begin{smallmatrix}2&-1\\-1&2\end{smallmatrix}\right)\)\\
Natural encoding & dual-rail qubit after a selector & qutrit, no selector needed\\
\bottomrule
\end{tabular}
\end{table}

At \(\tau=\ii\), \(a=(1/2,1/2)\), the momenta \((m,n)\in\{-1,0\}^2\) give \(Q=1/2\); the next eight give \(Q=5/2\).  At \(\tau=\rho\), \(Q_\rho(q)=\tfrac43(q_1^2-q_1q_2+q_2^2)\) and \(a=(1/3,2/3)\), the momenta \((0,0),(0,-1),(-1,-1)\) give \(Q=4/9\) and the next three give \(16/9\).  In both cases the offsets \(k_i+a\) from the deep hole to its nearest lattice points sum to zero, which is the content of the rotational symmetry and is used in \Cref{prop:splitting}.

\subsection{Ideal round benchmark (square case only)}

The round three-sphere in Hopf coordinates, \(\dd s^2=R^2(\dd\chi^2+\cos^2\!\chi\,\dd\phi^2+\sin^2\!\chi\,\dd\psi^2)\), with a scalar field \(\propto e^{\ii(m+u)\phi+\ii(n+v)\psi}\), reduces to a Jacobi-type problem whose regular solution behaves as \((\cos\chi)^{|m+u|}(\sin\chi)^{|n+v|}\).  In the zero-core Friedrichs realization,
\begin{equation}
 \lambda_{j,m,n}(a)=\frac{\ell(\ell+2)}{R^2}+\mu^2,
 \qquad \ell=2j+|m+u|+|n+v|,\quad j\ge0 .
 \label{eq:round-spectrum}
\end{equation}
For \(a=0\) this reproduces the degeneracies \((\ell+1)^2\), which we verified for \(\ell\le3\).

\begin{proposition}\label{prop:round}
The lowest eigenvalue of \eqref{eq:round-spectrum} is maximized uniquely at \(a=(1/2,1/2)\), where for \(\mu=0\) the two lowest distinct shells are \(3/R^2\) (fourfold) and \(8/R^2\), with separation \(5/R^2\).
\end{proposition}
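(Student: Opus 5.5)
Since \(\lambda=\ell(\ell+2)/R^2+\mu^2\) is strictly increasing in \(\ell\ge0\), the whole argument reduces to the minimal value of \(\ell=2j+|m+u|+|n+v|\) over \(j\ge0\) and \((m,n)\in\Z^2\). Here \(\mu\) is a fixed constant independent of \(a\), and we treat the same \(\mu\) for all modes as in \eqref{eq:round-spectrum}. The minimum is attained at \(j=0\), and the sum separates in \(m\) and \(n\), so
\begin{equation*}
 \ell_{\min}(a)=\dist(u,\Z)+\dist(v,\Z).
\end{equation*}
This is the \(\ell^1\) analogue of \Cref{prop:rectangular}. Each summand is a tent function on \(\R/\Z\) with unique maximum \(1/2\) at the half-integer. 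Hence \(\ell_{\min}\) is uniquely maximized at \(a=(1/2,1/2)\), with value \(\ell=1\). By monotonicity of \(\ell\mapsto\ell(\ell+2)\), the lowest eigenvalue is maximized there too, and the maximizer is unique.

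For the shells at \(a=(1/2,1/2)\), the quantities \(|m+1/2|\) and \(|n+1/2|\) range over \(\{1/2,3/2,5/2,\dots\}\). Each value is attained twice, namely \(m=0,-1\) for \(1/2\), \(m=1,-2\) for \(3/2\), and so on. The possible values of \(\ell\) are therefore the positive integers \(\ell=2j+p+q\) with \(p,q\) positive half-odd-integers, so \(p+q\ge1\).

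\begin{itemize}
 \item \textbf{Lowest shell, \(\ell=1\).} This forces \(j=0\) and \(p=q=1/2\), which gives the four momenta \((m,n)\in\{-1,0\}^2\). The eigenvalue is \(1\cdot3/R^2=3/R^2\) with multiplicity \(4\), matching the coordination count in \Cref{tab:compare}.
 \item \textbf{Next shell, \(\ell=2\).} This requires \(j=0\) and \(p+q=2\), i.e.\ \((p,q)=(1/2,3/2)\) or \((3/2,1/2)\). Each choice gives \(2\times2\) momenta, so the shell has \(8\) states. The case \(j=1\) is excluded because it would force \(p+q=0\).
\end{itemize}

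The eigenvalue of the \(\ell=2\) shell is \(2\cdot4/R^2=8/R^2\), so the separation between the two lowest shells is \(5/R^2\).

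The only delicate point is the role of the zero-core Friedrichs realization. We must make sure that \eqref{eq:round-spectrum} is the complete spectrum, which rules out extra modes from singular solutions at \(\chi=0,\pi/2\). We must also check that no shell with \(\ell\) strictly between \(1\) and \(2\) exists. The second issue is settled by the half-integrality argument above. For the first, we take \eqref{eq:round-spectrum} as given: at \(a=(1/2,1/2)\) the exponents \(|m+u|,|n+v|\ge1/2\) are never zero, and the Friedrichs extension selects the regular branch \((\cos\chi)^{|m+u|}(\sin\chi)^{|n+v|}\). With the spectrum fixed, the remaining steps are elementary counting, and these are what we carry out explicitly.
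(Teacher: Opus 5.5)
Your proof is correct and follows the paper's own argument. Both reduce to \(\ell_0(a)=\dist(u,\Z)+\dist(v,\Z)\le1\), with equality only at \((1/2,1/2)\), and then use that \(\ell\mapsto\ell(\ell+2)\) is increasing; your explicit half-odd-integer count of the \(\ell=1\) and \(\ell=2\) shells (multiplicities \(4\) and \(8\), eigenvalues \(3/R^2\) and \(8/R^2\)) fills in the shell data that the paper's one-line proof leaves implicit.
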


\begin{proof}
\(\ell_0(a)=\dist(u,\Z)+\dist(v,\Z)\le1\) with equality only at the antiperiodic point, and \(\ell\mapsto\ell(\ell+2)\) is increasing.
\end{proof}

This model minimizes an \(\ell^1\) quantity, not a quadratic form, so neither \Cref{prop:rational} nor \Cref{thm:units} is used: the optimum is the elementary maximization of \(\dist(u,\Z)+\dist(v,\Z)\).  It has no hexagonal analogue.  Its gaps are properties of the Friedrichs extension at the two singular axes; other self-adjoint extensions exist, and a finite core interpolates, so the numbers are idealized benchmarks.

\section{Measurable predictions}\label{sec:measurable}

The determinant of \Cref{sec:determinant} is a collective one-loop quantity and is not a laboratory observable; we therefore do not propose it as a test.  The measurable content is the resonance-frequency pattern.

\begin{proposition}[Linear splitting, stationary centroid]\label{prop:splitting}
Let \(a_0\) be a deep hole of \(Q_\tau\) with nearest offsets \(w_i=k_i+a_0\), \(i=1,\dots,c\), and let \(a=a_0+\varepsilon\).  Then, \emph{exactly},
\begin{equation}
 Q_\tau(w_i+\varepsilon)=\delta_\tau+2\,w_i^{T}G_\tau^{-1}\varepsilon+Q_\tau(\varepsilon),
\end{equation}
so the manifold splits linearly.  If, in addition, \(\sum_iw_i=0\), as at the square and hexagonal symmetry-fixed deep holes, the centroid obeys
\begin{equation}
 \overline{Q}=\delta_\tau+Q_\tau(\varepsilon),
 \qquad
 \Hess\,\overline{Q}=2\,G_\tau^{-1} .
\end{equation}
These branches are the \(c\) lowest eigenvalues provided
\(\|\varepsilon\|_{Q_\tau}<\tfrac12\bigl(\sqrt{\delta^{(2)}}-\sqrt{\delta_\tau}\bigr)\),
where \(\delta^{(2)}\) is the next shell value; for \(\tau=\ii\) the sharp condition is
\begin{equation}
 2|\varepsilon_u|+|\varepsilon_v|<1
 \quad\text{and}\quad
 |\varepsilon_u|+2|\varepsilon_v|<1 ,
 \label{eq:sharp-domain}
\end{equation}
for which \(Q=\tfrac12\pm\varepsilon_u\pm\varepsilon_v+\varepsilon_u^2+\varepsilon_v^2\) with independent signs.  In the round model \eqref{eq:round-spectrum} the four lowest eigenvalues are \((1+s)(3+s)/R^2\) with \(s=\pm\varepsilon_u\pm\varepsilon_v\), and \(\overline\lambda=(3+\varepsilon_u^2+\varepsilon_v^2)/R^2\), valid precisely for \(|\varepsilon_u|+|\varepsilon_v|<1/2\).
\end{proposition}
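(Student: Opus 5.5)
The plan is to expand the quadratic form and then control which branches are lowest using the triangle inequality in the \(Q_\tau\)-norm. Since \(Q_\tau\) is a quadratic form with symmetric matrix \(G_\tau^{-1}\), we have \(Q_\tau(w_i+\varepsilon)=Q_\tau(w_i)+2w_i^TG_\tau^{-1}\varepsilon+Q_\tau(\varepsilon)\) with no remainder. Here \(Q_\tau(w_i)=\delta_\tau\) by the definition of nearest offsets, which gives the exact linear splitting. Averaging over \(i\) then kills the linear term when \(\sum_i w_i=0\), so \(\overline Q=\delta_\tau+Q_\tau(\varepsilon)\). Its Hessian is that of \(\varepsilon^TG_\tau^{-1}\varepsilon\), namely \(2G_\tau^{-1}\). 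The vanishing of \(\sum w_i\) at the two symmetric deep holes follows because \(S_4\) (resp.\ \(S_3\)) permutes the nearest offsets cyclically and fixes no nonzero vector (\Cref{thm:units}); one can also check it directly from the explicit lists in \Cref{sec:shells}.

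For the validity domain, write \(\|\cdot\|\) for the \(Q_\tau\)-norm. Each nearest branch satisfies \(\|w_i+\varepsilon\|\le\sqrt{\delta_\tau}+\|\varepsilon\|\). Every other lattice offset \(w\) has \(\|w\|\ge\sqrt{\delta^{(2)}}\), so \(\|w+\varepsilon\|\ge\sqrt{\delta^{(2)}}-\|\varepsilon\|\). Under the stated bound on \(\|\varepsilon\|\) the first quantity is strictly smaller than the second. Hence the \(c\) nearest branches are strictly below all others and are exactly the \(c\) lowest eigenvalues.

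For \(\tau=\ii\), \(G^{-1}=I\) and \(w_i=(\pm\tfrac12,\pm\tfrac12)\), so the branches are \(\tfrac12\pm\varepsilon_u\pm\varepsilon_v+|\varepsilon|^2\). The sharp domain comes from comparing these branches with the next shell, whose offsets are \((\pm\tfrac32,\pm\tfrac12)\) and \((\pm\tfrac12,\pm\tfrac32)\). The condition is that the largest of the four branches stays below the smallest second-shell branch. Take the largest branch \(\tfrac12+|\varepsilon_u|+|\varepsilon_v|+|\varepsilon|^2\) and the second-shell branch \(\tfrac52-3|\varepsilon_u|-|\varepsilon_v|+|\varepsilon|^2\). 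The quadratic terms cancel, giving \(4|\varepsilon_u|+2|\varepsilon_v|<2\), i.e.\ \(2|\varepsilon_u|+|\varepsilon_v|<1\). The symmetric comparison gives the second inequality.

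This is the step I expect to be delicate. The sign choices must be optimized jointly, since the maximizing first-shell branch and the minimizing second-shell branch have correlated signs. One also has to argue sharpness: on the boundary a crossing actually occurs. Third-shell offsets such as \((\tfrac32,\tfrac32)\) need a separate check that they never undercut the second shell in this region; a norm comparison suffices for that.

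For the round model, \(\ell_0=\dist(u,\Z)+\dist(v,\Z)\) near \((\tfrac12,\tfrac12)\). The four lowest branches have \(j=0\) and \(|m+u|+|n+v|=1+s\) with \(s=\pm\varepsilon_u\pm\varepsilon_v\), giving \((1+s)(3+s)/R^2\). Averaging, the linear terms in \(s\) cancel and the mean of \(s^2\) is \(\varepsilon_u^2+\varepsilon_v^2\), which yields \(\overline\lambda\). The competitors are \(j=1\) with \(\ell=3+s\), or \(j=0\) with \(|m+u|+|n+v|=2\pm\dots\). Requiring the largest branch \(1+|\varepsilon_u|+|\varepsilon_v|\) to stay below the smallest competitor \(2-|\varepsilon_u|-|\varepsilon_v|\), and using monotonicity of \(\ell(\ell+2)\), gives exactly \(|\varepsilon_u|+|\varepsilon_v|<1/2\).
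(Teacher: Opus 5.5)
Your proposal is correct and follows the paper's proof essentially step for step. It uses the exact expansion of the quadratic form, cancellation of the linear term in the mean via \(\sum_i w_i=0\), the triangle inequality for the norm \(Q_\tau^{1/2}\) for the validity domain, and a comparison of the largest branch with the lowest intruder (and its transpose) for the sharp square and round-model domains; your symmetry argument for \(\sum_i w_i=0\) and your norm check on third-shell intruders are small additions to the paper's explicit offset lists and numerical sharpness check.
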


\begin{proof}
The first identity is the exact expansion of a quadratic form and does not require the centroid hypothesis.  For \(\tau=\ii\) the offsets are \((\pm\tfrac12,\pm\tfrac12)\) and for \(\tau=\rho\) they are \((\tfrac13,\tfrac23),(\tfrac13,-\tfrac13),(-\tfrac23,-\tfrac13)\); in both cases \(\sum_iw_i=0\), which kills the linear term in the mean.  A generic deep hole need not satisfy this additional identity.  For the validity domain, \(Q_\tau^{1/2}\) is a norm, so each \(\sqrt{Q_\tau(k+a_0+\varepsilon)}\) changes by at most \(\|\varepsilon\|_{Q_\tau}\); the stated inequality keeps the first shell below the second.  The sharp square condition compares the largest branch \(\tfrac12+|\varepsilon_u|+|\varepsilon_v|+\varepsilon_u^2+\varepsilon_v^2\) with the lowest intruder \(\tfrac52-3|\varepsilon_u|-|\varepsilon_v|+\varepsilon_u^2+\varepsilon_v^2\) and its transpose.  In the round model the largest branch is \(1+|\varepsilon_u|+|\varepsilon_v|\) and the lowest intruder has \(\ell=2-|\varepsilon_u|-|\varepsilon_v|\).
\end{proof}

We verified both the identities and the sharpness of the domains numerically; the condition \eqref{eq:sharp-domain} is necessary as well as sufficient, and at \(\varepsilon=(0.34,0.34)\), for instance, the fourth branch is no longer among the four lowest.  \Cref{fig:splitting} illustrates, in the order square then hexagonal, the linear branch splitting, the next-shell gap, and the stationary centroid predicted by \Cref{prop:splitting}.

\begin{figure}[H]
\centering
\begin{tikzpicture}[x=1cm,y=1cm,>=Latex,font=\small,
 axis/.style={->,line width=0.7pt,mutedgray},
 lev/.style={line width=1.5pt,protectblue},
 cent/.style={line width=1.0pt,protectgreen,dashed},
 higher/.style={line width=1.3pt,protectred}]

% ---- panel (a): square, fourfold ----
\begin{scope}
\node[font=\bfseries,align=center] at (3.0,4.30) {(a) $\tau=\ii$: fourfold $\to$ qubit};
\draw[axis] (0,0.15)--(0,3.95);
\draw[axis] (0,1.95)--(6.1,1.95);
\node[font=\scriptsize,rotate=90,text=mutedgray] at (-0.42,2.4) {spectral value};
\node[font=\scriptsize,text=mutedgray] at (3.0,-0.30) {phase error along a generic ray};
\draw[higher] (0.3,3.60)--(5.8,3.60);
\node[protectred,anchor=west,font=\scriptsize] at (4.05,3.83) {next shell};
\draw[lev] (0.3,0.97)--(3.05,1.95)--(5.8,0.97);
\draw[lev] (0.3,1.53)--(3.05,1.95)--(5.8,1.53);
\draw[lev] (0.3,2.37)--(3.05,1.95)--(5.8,2.37);
\draw[lev] (0.3,2.93)--(3.05,1.95)--(5.8,2.93);
\draw[cent] (0.3,2.12)--(3.05,1.95)--(5.8,2.12);
\filldraw[protectblue] (3.05,1.95) circle (2.0pt);
\node[protectgreen,anchor=west,font=\scriptsize] at (3.25,1.62) {centroid};
\node[mutedgray,anchor=west,font=\scriptsize] at (0.28,0.55) {$c=4$, gap $2\kappa^2$};
\end{scope}

% ---- panel (b): hexagonal, threefold ----
\begin{scope}[xshift=7.6cm]
\node[font=\bfseries,align=center] at (3.0,4.30) {(b) $\tau=\rho$: threefold $\to$ qutrit};
\draw[axis] (0,0.15)--(0,3.95);
\draw[axis] (0,1.95)--(6.1,1.95);
\node[font=\scriptsize,text=mutedgray] at (3.0,-0.30) {phase error along a generic ray};
\draw[higher] (0.3,3.30)--(5.8,3.30);
\node[protectred,anchor=west,font=\scriptsize] at (4.05,3.53) {next shell};
\draw[lev] (0.3,1.10)--(3.05,1.95)--(5.8,1.10);
\draw[lev] (0.3,1.80)--(3.05,1.95)--(5.8,1.80);
\draw[lev] (0.3,2.85)--(3.05,1.95)--(5.8,2.85);
\draw[cent] (0.3,2.10)--(3.05,1.95)--(5.8,2.10);
\filldraw[protectblue] (3.05,1.95) circle (2.0pt);
\node[mutedgray,anchor=west,font=\scriptsize] at (0.28,0.55) {$c=3$, gap $\tfrac43\kappa^2$};
\end{scope}
\end{tikzpicture}
\caption{Behaviour predicted by \Cref{prop:splitting}, schematic.  At a deep hole the lowest manifold splits \emph{linearly} in the phase error, with slopes \(2w_i^{T}G_\tau^{-1}\varepsilon\); the lowest level therefore has a cusp rather than a smooth maximum.  Because \(\sum_iw_i=0\), the centroid (dashed) is stationary with Hessian \(2G_\tau^{-1}\).  The square lattice gives a fourfold manifold requiring a selector to reach a qubit; the hexagonal lattice gives a threefold manifold matching a qutrit directly.  Slopes are drawn schematically; the exact square values are \(\pm\varepsilon_u\pm\varepsilon_v\).}
\label{fig:splitting}
\end{figure}

\Cref{prop:splitting} is the operational content: it predicts a \emph{cusp} rather than a smooth maximum in the lowest resonance as a function of phase, with unit slopes for coordinate-axis scans in the square case, together with a stationary shell centroid of curvature \(2G_\tau^{-1}\).  Both are visible in a transmission spectrum.

\begin{remark}[What the centroid does \emph{not} test]\label{rem:no-signature}
The centroid curvature \(2G_\tau^{-1}\) follows from the quadratic dispersion \eqref{eq:torus-spectrum} alone; it holds at the deep hole of \emph{any} lattice whose nearest offsets sum to zero, including every rectangular one.  It therefore does not detect the unit group.  We are not aware of a directly measurable quantity whose value distinguishes \(\tau=\ii\) or \(\tau=\rho\) from a nearby non-exceptional modulus: the quantity that the extra unit does control, \eqref{eq:unified-response}, is a determinant response and is not observable.  Accordingly, \Cref{thm:units,thm:response} should be read as structural results that explain and unify the exact response data, not as an experimental signature of arithmetic.  We flag this explicitly because the opposite reading is the natural temptation, and it would not be supported.
\end{remark}

\section{Encodings: square qubit, hexagonal qutrit}\label{sec:qubit}

\subsection{Quantization is the step that creates a logical system}

Let \(\psi_0,\dots,\psi_{d-1}\) be orthogonal modes selected from the lowest manifold, with bosonic operators \(a_i\).  The single-excitation subspace
\begin{equation}
 \cH_Q=\operatorname{span}\bigl\{|i_L\rangle=a_i^\dagger|\mathrm{vac}\rangle\bigr\}\cong\C^{d}
 \label{eq:dual-rail}
\end{equation}
is a \(d\)-rail logical system: the two-rail case is the standard photonic qubit \cite{KokEtAl2007}, while multimode quantum optics and path-encoded qutrit experiments support the general modal interpretation \cite{FabreTreps2020,LuEtAl2020}.  The classical field \(\sum_ic_i\psi_i\) has the same mode geometry and is a useful coherent simulator, but it is a qubit or qutrit only after restriction to the single-photon sector.

\subsection{Square: fourfold parent manifold, qubit after a selector}

At \(\tau=\ii\) the lowest manifold has \(c=4\), so a device-specific selector \(V\) --- polarization, mirror parity, helicity, or a designed anisotropy --- must isolate a logical pair and detune the remaining two.  With \(P\) the projector onto the chosen pair and \(Q=I-P\), the leakage gap \(\Delta_{\rm leak}=\inf\operatorname{spec}(QHQ)-\sup\operatorname{spec}(PHP)\) is controlled by the parent-shell scales \(2\kappa^2\) and \(5/R^2\), and if \(\|V\|\ll\Delta_{\rm parent}\) subspace perturbation theory bounds higher-shell admixture by \(O(\|V\|/\Delta_{\rm parent})\).

\subsection{Hexagonal: threefold manifold, qutrit without a dimension-reducing selector}

At \(\tau=\rho\) the lowest manifold has \(c=3\) and matches the dimension of a qutrit exactly, with shell gap \(\tfrac43\kappa^2\).  No dimension-reducing selector is required, which removes the main structural awkwardness of the square case; mode-resolved preparation, control, and readout are still required.  The trade-off is a smaller relative gap and the absence of an exactly solvable round benchmark.  Whether a qutrit or a selector-reduced qubit is preferable is a device question, but the hexagonal geometry is the cleaner realization of the mode-counting principle.

\subsection{What the gap does not do}

Three qualifications, stated without hedging.  First, this is spectral isolation, not topological protection: the degeneracy is a symmetry degeneracy, and \Cref{prop:splitting} shows that a generic phase error lifts it at \emph{first} order.  No topological invariant, anyonic structure, or code space is claimed.  Second, the mechanism is of the same type as a large free spectral range; the interest lies in the exact accompanying data, not in the existence of a gap.  Third, the holonomies commute, so within \(\cH_Q\) they generate phase gates only; universal control requires an additional noncommuting mode coupler, and entangling operations require further optical resources.  Finally, stationarity of the determinant and of the shell centroid concerns collective quantities; it does not imply that a \emph{logical} splitting is first-order insensitive, which is a property of the selector \(V\) and must be established separately.

\begin{remark}[Relation to the character-variety picture]\label{rem:fricke}
A commuting \(\SU(2)\) pair is simultaneously diagonalizable, and the trace coordinates \(x=2\cos2\pi u\), \(y=2\cos2\pi v\), \(z=2\cos2\pi(u+v)\) satisfy the Cayley cubic \(x^2+y^2+z^2-xyz-4=0\), one of the character surfaces studied in \cite{PlanatEtAl2022Character}.  Its compact \(\SU(2)\) locus is the pillowcase used in the Fricke topological-qubit proposal \cite{PlanatEtAl2022Fricke}; the antiperiodic point maps to the node \((-2,-2,2)\) with \(U_a=U_b=-I\).  We record this but do not build on it: the spectrum \eqref{eq:torus-spectrum} depends only on the abelian eigenphases, the internal doublet plays no role anywhere in this paper, and at the node the holonomies are central.  Everything here holds verbatim for \(\U(1)^2\).  Recovering a role for the nonabelian structure requires a link with nonabelian group, a different problem.
\end{remark}

The compact locus and the relevant central character are shown in \Cref{fig:cayley-node}; the figure is a geometric pointer to the prior Fricke construction, not evidence for the present spectral claims.
\begin{figure}[H]
\centering
\includegraphics[width=0.66\textwidth]{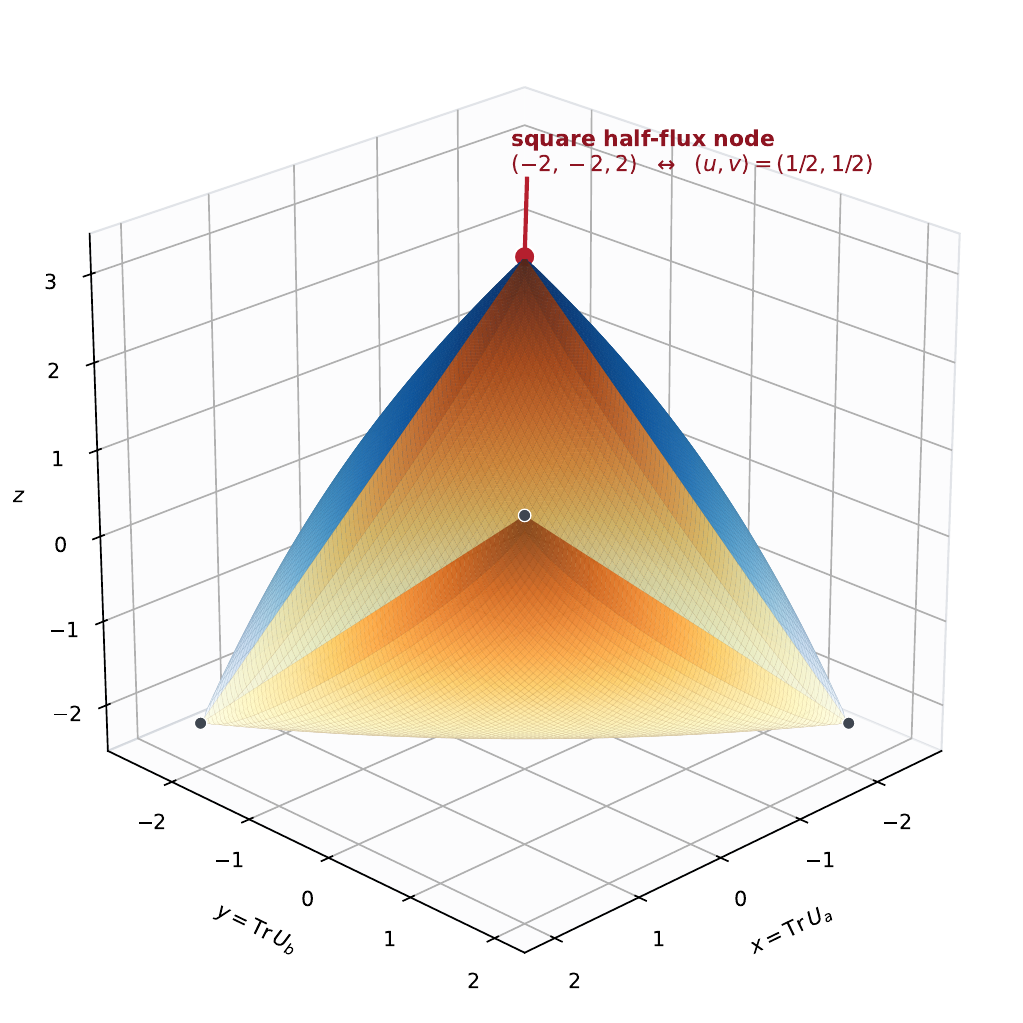}
\caption{The compact \(\SU(2)\) locus of the Cayley cubic, with sheets \(z_\pm=\tfrac12(xy\pm\sqrt{(x^2-4)(y^2-4)})\) meeting at four central characters; the highlighted node \((-2,-2,2)\) corresponds to \((u,v)=(1/2,1/2)\).  As \Cref{rem:fricke} explains, this is a labelling of the holonomy torus and carries none of the spectral content.}
\label{fig:cayley-node}
\end{figure}

\section{A quantitative integrated-photonic test}\label{sec:device}

\subsection{Device selected: a phase-seamed microring torus}

We select one specific platform: an \(N\times N\) array of nominally identical silicon site microrings, coupled through off-resonant link rings.  Arrays as large as \(10\times10\), with a reported tunnelling scale near \(16\) GHz, have already been fabricated and spectrally imaged \cite{HafeziEtAl2013}.  The proposed first device uses \(N=8\), hence 64 site rings.  All bulk links are real and equal.  The right-to-left boundary links carry the common hopping phase \(2\pi u\), and the top-to-bottom links carry \(2\pi v\).  The phase around every elementary plaquette is zero: only the two noncontractible products are nontrivial.  All links within each seam family are tuned together by thermo-optic or electro-optic phase shifters.

The two wrap-around seams are the principal fabrication risk.  There are \(2N=16\) boundary links; a naive planar routing in which each passes as many as \(N-1=7\) bulk rows produces up to \(112\) crossings, although multilayer routing or perimeter paths could reduce this layout-dependent number.  More importantly, an off-resonant link ring generally controls both the hopping amplitude and phase, so the long seam links cannot simply be assumed equivalent to bulk links.  Their hopping rates must meet the tolerance in \Cref{tab:tolerances}.  Modulated-ring frequency lattices provide demonstrated phase-programmable long-range hopping and band spectroscopy \cite{DuttEtAl2020,DinhEtAl2024}, but a frequency-mode chain is not automatically a finite periodic two-torus: a second periodic coordinate and controlled boundary identification would still have to be engineered.  We therefore retain the spatial array as the primary proposal because it realizes the required \(N\times N\) graph and permits site-resolved imaging; synthetic dimensions are a promising alternative architecture, not a seamless realization already supplied by the cited experiments.

Weak input and drop buses couple to several non-equivalent site rings; alternatively, the out-of-plane scattered field is imaged site by site as in Ref.~\cite{HafeziEtAl2013}.  A narrow-linewidth C-band laser is swept over a \(30\)--\(40\) GHz window and the complex multiport transmission is recorded by an optical vector analyser.  Multiple readout channels matter: precisely at a degeneracy a scalar transmission trace contains one pole, not a visible count of coincident modes.  The multiplicity is obtained from the rank of the fitted pole residue or, more simply, by counting the branches after a generic two-coordinate detuning.  The probe coupling must be weak: unlike the strongly overcoupled imaging experiment of Ref.~\cite{HafeziEtAl2013}, the objective here is to preserve linewidth.  Weak coupling reduces extraction efficiency, but the reduction depends on the complete add/drop geometry and cannot be inferred from the coupling-rate ratio alone; it must be included in an experimental signal-to-noise budget.  Coherent vector detection is assumed because it fits complex poles even when intensity maxima overlap.  Adding one diagonal link per cell changes the square graph into a triangular graph and tests the hexagonal qutrit case on the same principle.

Each site ring also has clockwise and counterclockwise circulation modes.  Reciprocal link phases give them conjugate holonomies, so exciting both would double every multiplicity.  Directional bus coupling, as used in ring-array experiments, selects one circulation.  Residual backscattering between the two circulations is included as a calibration error below; this selection is distinct from the later dimension-reducing selector required to turn the square fourfold manifold into a qubit.

The proposed square topology and the multiport readout chain are summarized in \Cref{fig:device}, which follows the two conceptual figures and precedes the finite-device formulas.

\begin{figure}[H]
\centering
\begin{tikzpicture}[font=\small,>=Latex,
 site/.style={circle,draw=protectblue,fill=white,minimum size=5.5mm,inner sep=0pt,line width=0.8pt},
 bulk/.style={mutedgray,line width=0.8pt},
 useam/.style={protectblue,line width=1.6pt,-{Latex[length=2mm]}},
 vseam/.style={protectred,line width=1.6pt,-{Latex[length=2mm]}},
 box/.style={draw=mutedgray,rounded corners=2pt,align=center,minimum height=8mm,minimum width=19mm,fill=white,font=\scriptsize}]
% representative lattice
\node[font=\bfseries] at (2.25,4.90) {(a) phase-seamed ring graph};
\foreach \x in {0,...,3}{
  \foreach \y in {0,...,3}{\node[site] (n\x\y) at (1.15*\x,1.05*\y) {};}}
\foreach \y in {0,...,3}{\foreach \x in {0,...,2}{\pgfmathtruncatemacro{\xp}{\x+1}\draw[bulk] (n\x\y)--(n\xp\y);}}
\foreach \x in {0,...,3}{\foreach \y in {0,...,2}{\pgfmathtruncatemacro{\yp}{\y+1}\draw[bulk] (n\x\y)--(n\x\yp);}}
\draw[useam] (n03.north) to[out=90,in=90,looseness=1.18] node[below=1pt,font=\scriptsize] {$e^{2\pi\ii u}$} (n33.north);
\draw[vseam] (n30.east) to[out=0,in=0,looseness=1.30] node[right,font=\scriptsize] {$e^{2\pi\ii v}$} (n33.east);
\node[font=\scriptsize,text=mutedgray,align=center] at (1.72,-0.63) {representative $4\times4$ drawing;\\proposed device: $8\times8$};
% readout chain
\node[font=\bfseries] at (8.55,4.90) {(b) transmission measurement};
\node[box] (laser) at (5.95,2.70) {tunable\\C-band laser};
\node[box] (chip) at (8.55,2.70) {twisted\\ring array};
\node[box,minimum width=22mm] (det) at (11.20,2.70) {multiport\\vector analyser};
\draw[->,line width=0.9pt] (laser)--(chip);
\draw[->,line width=0.9pt] (chip)--(det);
\node[box,minimum width=24mm] (ctrl) at (8.55,0.90) {two phase controls\\$u$ and $v$};
\draw[->,protectred,line width=0.9pt] (ctrl)--(chip);
\node[align=center,font=\scriptsize,text=mutedgray] at (8.55,-0.08) {fit Lorentzian centres versus $(u,v)$;\\extract multiplicity, gap, slopes and centroid};
\end{tikzpicture}
\caption{Concrete spectroscopy proposal.  The graph drawing suppresses the physical link rings and shows only a representative \(4\times4\) array.  In the \(8\times8\) device, two phase-controlled seam families close the square lattice without plaquette flux.  Weak multiport coupling and a swept laser measure both pole positions and residue rank.  Diagonal links give the triangular configuration used for the qutrit test.}
\label{fig:device}
\end{figure}

\subsection{Exact finite-device spectrum}

Write all optical frequencies and rates in cycles per second and let \(j\) denote the calibrated nearest-neighbour hopping rate.  In the one-photon or linear classical sector the square device is the graph Laplacian
\begin{equation}
 \frac{H_\square}{h}=\nu_{\rm ref}I+jL_\square(u,v),
 \qquad
 b_{r+N,s}=e^{2\pi\ii u}b_{r,s},\quad
 b_{r,s+N}=e^{2\pi\ii v}b_{r,s}.
 \label{eq:device-hamiltonian}
\end{equation}
Its \(N^2\) resonance centres are exactly
\begin{equation}
 \nu^{\square}_{pq}(u,v)=\nu_{\rm ref}+4j
 -2j\!\left[
 \cos\frac{2\pi(p+u)}{N}+\cos\frac{2\pi(q+v)}{N}
 \right],
 \qquad p,q\in\Z/N\Z .
 \label{eq:square-device-spectrum}
\end{equation}
Thus the finite device is not assumed to have the continuum dispersion \eqref{eq:torus-spectrum}: it has the measured cosine dispersion \eqref{eq:square-device-spectrum}, which approaches it near the band bottom with scale \(j(2\pi/N)^2\).

The finite device retains the relevant rotational equivariance, not merely the continuum dispersion.  The quarter turn acts on the holonomy torus by \((u,v)\mapsto(v,-u)\), fixing \((1/2,1/2)\) modulo \(\Z^2\), while the \(120^\circ\) rotation fixes the two hexagonal deep holes.  These operations are automorphisms of the corresponding finite graphs with their seam phases and therefore make the degeneracies below exact.  Symmetry alone does not prove global optimality for a nonsmooth lowest-eigenvalue function; for the square graph the maximum follows directly by separating the two cosine terms, and for the selected triangular device it is confirmed by exhaustive evaluation of the exact \(N=8\) spectrum.  Thus no continuum-limit argument is used for the finite-device claims.

At \((u,v)=(1/2,1/2)\) the four lowest modes are \((p,q)\in\{0,-1\}^2\), with
\begin{align}
 \nu_\star^\square-\nu_{\rm ref}
 &=4j\left(1-\cos\frac{\pi}{N}\right),\\
 \Delta_N^\square
 &=2j\left(\cos\frac{\pi}{N}-\cos\frac{3\pi}{N}\right).
 \label{eq:square-device-gap}
\end{align}
For \(u=1/2+\varepsilon_u\), \(v=1/2+\varepsilon_v\), their four exact branches are
\begin{equation}
 \nu_{st}^\square=\nu_{\rm ref}+4j-2j\!\left[
 \cos\frac{2\pi(\varepsilon_u+s/2)}{N}
 +\cos\frac{2\pi(\varepsilon_v+t/2)}{N}\right],
 \qquad s,t=\pm1 .
 \label{eq:square-device-branches}
\end{equation}
To second order,
\begin{equation}
 \nu_{st}^\square=\nu_\star^\square
 +A_N(s\varepsilon_u+t\varepsilon_v)
 +\frac{C_N}{2}(\varepsilon_u^2+\varepsilon_v^2)+O(\|\varepsilon\|^3),
 \label{eq:device-expansion}
\end{equation}
where
\begin{equation}
 A_N=\frac{4\pi j}{N}\sin\frac{\pi}{N},
 \qquad
 C_N=\frac{8\pi^2j}{N^2}\cos\frac{\pi}{N}.
\end{equation}
The lowest fitted line therefore has the cusp
\(\nu_{\min}=\nu_\star^\square-A_N(|\varepsilon_u|+|\varepsilon_v|)+O(\|\varepsilon\|^2)\).  The four-line centroid is exactly
\begin{equation}
 \bar\nu_\square=\nu_{\rm ref}+4j-2j\cos\frac{\pi}{N}
 \left[\cos\frac{2\pi\varepsilon_u}{N}+\cos\frac{2\pi\varepsilon_v}{N}\right],
 \label{eq:square-device-centroid}
\end{equation}
so its coordinate curvature at the optimum is \(C_N\).

For equal diagonal and axial couplings, the triangular configuration has
\begin{equation}
 \nu^{\triangle}_{pq}=\nu_{\rm ref}+6j-2j\left[
 \cos\theta_p+\cos\theta_q+\cos(\theta_p-\theta_q)\right],
 \quad \theta_p=\frac{2\pi(p+u)}{N},\quad
 \theta_q=\frac{2\pi(q+v)}{N}.
 \label{eq:triangular-device-spectrum}
\end{equation}
Realizing \eqref{eq:triangular-device-spectrum} requires three seam settings rather than two: with the diagonal along \(a_1-a_2\), a diagonal link crossing the right seam carries \(e^{2\pi\ii u}\), one crossing the top seam carries \(e^{-2\pi\ii v}\), and the corner link carries \(e^{2\pi\ii(u-v)}\).  These are determined by \(u\) and \(v\) but must be implemented as separate phase families, in addition to matching the diagonal hopping rate to the axial one.

At \((u,v)=(1/3,2/3)\), the modes \((0,0),(0,-1),(-1,-1)\) are exactly degenerate.  With \(\alpha=2\pi/N\),
\begin{align}
 \nu_\star^\triangle-\nu_{\rm ref}
 &=6j-2j\left(2\cos\frac{\alpha}{3}+\cos\frac{2\alpha}{3}\right),\\
 \Delta_N^\triangle
 &=2j\left(2\cos\frac{\alpha}{3}-\cos\frac{2\alpha}{3}
 -\cos\frac{4\alpha}{3}\right).
 \label{eq:triangular-device-gap}
\end{align}
For a scan in \(u\), their first-order slopes are \(-B_N,0,B_N\), where
\begin{equation}
 B_N=\frac{4\pi j}{N}\left[
 \sin\frac{2\pi}{3N}+\sin\frac{4\pi}{3N}\right].
 \label{eq:triangular-device-slope}
\end{equation}

For direct comparison with a transmission measurement, \Cref{tab:device-numbers} lists the square and triangular line positions and shell gaps at the selected device scale.

\begin{table}[t]
\centering
\caption{Directly measurable predictions for \(N=8\) and \(j=16\) GHz.  Frequencies are offsets from \(\nu_{\rm ref}\).  The final column is the exact spectrum at \(\varepsilon_u=0.10\), \(\varepsilon_v=0\), not a Taylor approximation.}
\label{tab:device-numbers}
\small
\begin{tabular}{@{}>{\raggedright\arraybackslash}p{0.19\textwidth}c c c >{\raggedright\arraybackslash}p{0.25\textwidth}@{}}
\toprule
Graph and point & Multiplicity & \(\nu_\star-\nu_{\rm ref}\) & Shell gap & Lines at \(\varepsilon_u=0.10\) \\
\midrule
Square, \((1/2,1/2)\) & 4 & 4.872 GHz & 17.318 GHz & 4.002 (twice), 5.924 (twice) GHz\\
Triangular, \((1/3,2/3)\) & 3 & 6.468 GHz & 18.106 GHz & 4.743, 6.659, 8.554 GHz\\
\bottomrule
\end{tabular}
\end{table}

The square doublet separation in the last column is \(1.922\) GHz; the two adjacent triangular separations are \(1.915\) and \(1.895\) GHz.  A small generic square detuning \((\varepsilon_u,\varepsilon_v)=(0.15,0.075)\) produces four distinct poles at \(2.968,4.409,5.846,7.288\) GHz, with a minimum adjacent separation of \(1.437\) GHz.  Distinct poles do not necessarily produce four intensity maxima.

We therefore separate visual peak resolution from complex pole estimation.  For the sum of \emph{two isolated, equal, incoherent} Lorentzians of full width at half maximum \(\Gamma\) and separation \(\Delta\), the midpoint becomes a minimum exactly when \(\Gamma<\sqrt3\,\Delta\).  This gives a two-line merging threshold \(3.33\) GHz for the \(\varepsilon_u=0.10\) square doublet.  It must not be applied pairwise to a four-line cluster: neighbouring tails interact, and an equal-residue simulation of the small generic quartet has only two intensity maxima at \(\Gamma=2\) GHz.  The four complex poles remain identifiable in a calibrated multiport fit, but the manuscript does not count them as four visually resolved peaks.

For the linewidth, Ref.~\cite{HafeziEtAl2013} quotes \(\kappa_{\rm in}\approx1\) GHz in a field-decay convention and gives an intensity full width \(2(\kappa_{\rm ex}+\kappa_{\rm in})\) for the add/drop response.  The corresponding intrinsic intensity linewidth is therefore \(\Gamma_{\rm int}\approx2\) GHz and \(Q_{\rm int}\approx9.7\times10^4\) near \(1550\) nm.  We adopt the explicit design target \(\Gamma_{\rm loaded}\le2.2\) GHz, or \(Q_{\rm loaded}\gtrsim8.8\times10^4\), leaving only about \(0.2\) GHz of total linewidth for all readout ports and additional loss.  At this loaded linewidth the \(1.922\) GHz doublet remains two-peaked by a factor \(3.33/2.2=1.51\), while the square and triangular shell gaps are about \(7.9\) and \(8.2\) linewidths.

The robust visual multiplicity count uses a larger generic detuning, because \eqref{eq:square-device-branches} is exact in \(\varepsilon\).  At \((\varepsilon_u,\varepsilon_v)=(0.25,0.125)\) the four lines are \(1.993,4.393,6.771,9.171\) GHz.  Their minimum adjacent separation is \(2.377\) GHz, and the equal-residue four-Lorentzian sum retains four maxima up to \(\Gamma\approx3.30\) GHz, a factor \(1.50\) above the \(2.2\) GHz target.  We have also checked that these four branches remain the four lowest throughout the scan region used below.  The multiplicity count should therefore be performed at large detuning and the slope fit at small detuning, where linearity is the point.  Unequal residues and coherent interference can shift visual thresholds, which is why the multiport pole fit remains the primary analysis.  The harder centroid measurements predict coordinate curvatures \(C_8=18.237\) GHz for the square graph and \(36.819\) GHz for the triangular graph; at \(\varepsilon_u=0.10\) their exact centroid shifts are only \(0.091\) and \(0.184\) GHz, well inside a linewidth, so they rely on a global multi-line fit rather than on resolving a displacement.

\subsection{Calibration tolerances and measurement sequence}

The ideal degeneracies will be obscured unless the site resonances and couplings are calibrated, and the distance from present practice should be stated numerically rather than as an aspiration.  \Cref{tab:tolerances}, the fourth and final table, compares the scales quoted for the platform of Ref.~\cite{HafeziEtAl2013} with the targets adopted here.  The shell-spread columns come from 200 realizations of the \(64\times64\) square matrix \eqref{eq:fit-hamiltonian}, with independent zero-mean Gaussian site and bond errors and a fixed random seed.  They are a reproducible design Monte Carlo, not measurements or universal fabrication limits.

Two points follow.  First, mapping the reported characteristic site mismatch \(0.8j\) to a Gaussian root-mean-square scale gives a shell spread of order \(7\) GHz, so the experiment would fail without trimming.  Independent heaters or permanent trimming of all \(64\) sites are therefore part of the device, and the site-uniformity scale must improve by roughly a factor of \(60\).  Second, site and bond disorder are both self-averaged in delocalized Bloch states, but their prefactors differ: a site matrix element scales as \(\sigma_\nu/N\), whereas bond disorder involves more matrix elements with absolute scale \(j\sigma_j/j\).  In the Monte Carlo, \(0.2\) GHz site disorder alone gives a mean shell spread \(0.11\) GHz, while \(1\%\) hopping disorder alone gives \(0.24\) GHz.  We therefore adopt \(\sigma_j/j\lesssim0.5\%\) as a prudent design target, yielding a combined mean spread \(0.16\) GHz and a 90th percentile \(0.23\) GHz.  This is a simulation-based target to revisit with foundry statistics, not a sharp physical threshold.  It matters especially for the seam links, whose geometry differs most from the bulk; the triangular diagonal links must meet the same tolerance.

\begin{table}[t]
\centering
\caption{Calibration budget.  The reported characteristic scales refer to the ring-array platform of Ref.~\cite{HafeziEtAl2013} (\(j\approx16\) GHz).  The last two columns give the mean and 90th-percentile spread of the fourfold shell at \((u,v)=(1/2,1/2)\) in the stated Gaussian Monte Carlo.  Backscattering is not included in this single-circulation \(64\times64\) simulation.}
\label{tab:tolerances}
\scriptsize
\begin{tabular}{@{}l c c c c@{}}
\toprule
Disorder channel or scenario & Reported & Target & Mean spread & 90th percentile \\
\midrule
Site frequency \(\sigma_\nu\)      & \(0.8j\approx12.8\) GHz & \(\lesssim0.2\) GHz  & --- & ---\\
Hopping mismatch \(\sigma_j/j\)    & \(0.04\)                & \(\lesssim0.005\)    & --- & ---\\
Backscattering \(\beta\)           & \(0.04j\approx0.64\) GHz& \(\lesssim0.2\) GHz  & --- & ---\\
\midrule
\multicolumn{3}{@{}l}{reported scales mapped to Gaussian RMS (\(12.8\) GHz, \(4\%\))} & \(7.45\) GHz & \(10.57\) GHz\\
\multicolumn{3}{@{}l}{sites trimmed only (\(0.2\) GHz, \(4\%\))}                    & \(0.98\) GHz & \(1.37\) GHz\\
\multicolumn{3}{@{}l}{previous budget (\(0.2\) GHz, \(1\%\))}                       & \(0.27\) GHz & \(0.37\) GHz\\
\multicolumn{3}{@{}l}{budget adopted here (\(0.2\) GHz, \(0.5\%\))}                 & \(0.16\) GHz & \(0.23\) GHz\\
\bottomrule
\end{tabular}
\end{table}

After projection to the selected circulation, loss and residual detuning are included in the actual fit through the non-Hermitian matrix
\begin{equation}
 H_{\rm fit}/h=\nu_{\rm ref}I+jL(u,v)+\diag(\delta\nu_r)-\frac{\ii}{2}\diag(\Gamma_r).
 \label{eq:fit-hamiltonian}
\end{equation}

The classical test is then operationally fixed:
\begin{enumerate}
 \item calibrate \(\nu_r\), \(j\), and \(\Gamma_r\) from the untwisted spectrum and tune the rings into the target tolerance;
 \item establish the fourfold multiplicity first, at a \emph{large} generic detuning where the branches are widely separated: \((\varepsilon_u,\varepsilon_v)=(0.25,0.125)\) gives lines at \(1.993,4.393,6.771,9.171\) GHz and a smallest adjacent separation of \(2.377\) GHz.  For equal incoherent residues the full four-line sum retains four maxima up to \(\Gamma\approx3.30\) GHz, a factor \(1.50\) above the \(2.2\) GHz loaded-linewidth target; confirm the count independently from the fitted pole-residue rank;
 \item verify the \(17.318\) GHz next-shell gap, then sweep \(u\in[0.35,0.65]\) at fixed \(v=1/2\) in steps of \(0.02\), fitting all poles of the transmission matrix rather than only intensity maxima, and check the exact branches \eqref{eq:square-device-branches} and the cusp slope \(A_8=9.618\) GHz.  Note that at \(v=1/2\) exactly the \(t=\pm1\) branches coincide, so this scan shows a two-line splitting; the fourfold count comes from the previous step, not from this one;
 \item repeat at the triangular point, verifying the threefold crossing and the three frequencies in \Cref{tab:device-numbers};
 \item shear one coupling family or detune one diagonal-link family and confirm the predicted loss of degeneracy.
\end{enumerate}
The phase scan corresponds to changing the seam phase by \(2\pi\varepsilon_u\); the \(0.10\) point is therefore a \(36^\circ\) offset from the symmetry-fixed value.  The primary observables are the multiplicity, shell gap and linear splitting.  The centroid curvature is secondary because its predicted shift is well below the linewidth, although a global multi-line fit constrained by the exact branch formula can locate a centroid far more accurately than one linewidth.

None of these measurements detects complex multiplication or the determinant response.  They test the finite graph regularization of the lattice spectrum and its symmetry-fixed deep holes, exactly as delimited in \Cref{rem:no-signature}.

\subsection{Quantum stage}

Only after the classical spectrum is fitted should the logical proposal be tested.  In the square case one introduces and characterizes the selector, then injects a heralded single photon into a superposition of the selected pair.  In the triangular case one prepares the three resolved mode amplitudes directly.  Leakage is measured while the seam phase is scanned.  A successful classical spectrum does not by itself demonstrate a qubit or qutrit; the single-photon restriction, coherent preparation, tomography, a noncommuting intra-manifold coupler and, eventually, two-particle interference remain separate requirements.

\section{Discussion}

Four statements are often run together and should be kept apart.
\begin{itemize}
 \item \emph{Deep hole \(=\) gap optimum} is a definition.
 \item \emph{CM \(\Rightarrow\) torsion optimum} is a correct theorem (\Cref{prop:rational}), but one-directional and not inferable from the optimum's location (\Cref{prop:rectangular,rem:not-detectable}).
 \item \emph{Extra unit \(\Rightarrow\) the isotropic response \(\mathcal K=2\pi(\Im\tau)G_\tau^{-1}\)} is exact (\Cref{thm:units,thm:response}) and fails without the unit (\Cref{rem:not-general}).  The same symmetry forces the current to vanish at the hexagonal deep holes, whereas at a two-torsion point current cancellation already follows from inversion.  The exceptional tensor form is the arithmetic--geometric statement: what one would see, if one could see it, is enhanced rotational symmetry, whose classification through the unit group is the arithmetic interpretation.
 \item \emph{An arithmetic experimental signature} is \emph{not} established (\Cref{rem:no-signature}).
\end{itemize}

The coordination number of the deep hole controls the dimension of the lowest manifold: four for the square lattice, three for the hexagonal.  This is the cleanest link between the geometry and the logical dimension, and it is worth pursuing in higher-dimensional synthetic lattices, where the deep holes of \(D_4\), \(E_8\) and the Leech lattice have well-understood coordination \cite{ConwaySloane1999}.

The finite-device calculation changes the experimental status of the paper but not its arithmetic status.  Equations \eqref{eq:square-device-spectrum} and \eqref{eq:triangular-device-spectrum} give falsifiable GHz-scale line positions without invoking the determinant.  They also expose the actual bottleneck: not spectral resolution of the shell gap, which is generous, but resonance alignment across 64 rings and controlled closure of the two seams.  A referee can therefore distinguish a failed physical design from a failed lattice prediction.

Limitations.  The round result is a scalar proxy in a specific self-adjoint realization; a fabricated structure requires the vector Maxwell operator with material boundary conditions.  The microring proposal is a tight-binding coupled-mode design, not that vector solution, and \eqref{eq:fit-hamiltonian} treats loss and disorder phenomenologically.  The wrap-around routing, 64-ring frequency alignment, and equal triangular couplings have not been demonstrated together.  The modulus \(\tau\) is not determined by the Hopf geometry.  The holonomies commute, so universal control needs an extra element.  The next calculations are a layout-level electromagnetic simulation, Monte Carlo disorder and thermal-crosstalk analysis, and optimization of the input/output coupling matrix.

\section{Conclusions}

Two commuting holonomies on a two-cycle photonic geometry have spectrally distinguished operating points at the deep holes of the momentum lattice.  Their location follows from lattice symmetry and carries no arithmetic information; complex multiplication guarantees torsion optima but cannot be inferred from them.  The structural role of arithmetic is the unit group: at \(\tau=\ii\) and \(\tau=\rho\) the quarter-turn and the \(120^\circ\) rotation fix exactly the origin and the deep holes.  At these points the current vanishes; at the square two-torsion point this already follows from inversion, whereas at the hexagonal deep holes it follows from the order-three unit.  What the extra rotational symmetry forces in both cases, together with \(\Delta_\zeta\log D_\tau=-4\pi/\Im\tau\), is the exact susceptibility \(\mathcal K=2\pi(\Im\tau)G_\tau^{-1}\), a tensor form that demonstrably fails without the extra unit.

The continuum data are: a fourfold manifold with gap \(2\kappa^2\) and \(D_\ii(1/2,1/2)=2\) at \(\tau=\ii\); a threefold manifold with gap \(\tfrac43\kappa^2\) and \(D_\rho(1/3,2/3)=\sqrt3\) at \(\tau=\rho\); a linear splitting with stationary centroid of curvature \(2G_\tau^{-1}\).  The proposed \(8\times8\) microring regularization converts these statements into exact transmission-spectrum targets.  At \(j=16\) GHz it predicts gaps of \(17.318\) GHz and \(18.106\) GHz and a \(1.922\) GHz square doublet separation at a \(0.1\) phase-coordinate error, against a design target \(\Gamma_{\rm loaded}\le2.2\) GHz.  A larger generic detuning gives four visibly separated branches in the equal-residue model.  This is now a quantitative experimental proposal, although its seam closure, calibration, and signal-to-noise budget remain to be demonstrated.

An isolated manifold is not a logical system: after restriction to the single-photon sector, the square case yields a dual-rail qubit once a selector removes two of the four modes, while the triangular case yields a qutrit without a dimension-reducing selector.  Leakage from the parent manifold is bounded by the measured shell gap; mode-resolved preparation, intra-manifold dephasing, universal control and entanglement remain engineering problems.

\section*{Author Contributions}
Conceptualization, methodology, formal analysis, software, writing---original draft and writing---review and editing: M.P.  The author has read and agreed to the published version of the manuscript.

\section*{Funding}
This research received no external funding.

\section*{Conflicts of Interest}
The author declares no conflict of interest.

\section*{Use of Artificial Intelligence}
Large language models were used for manuscript restructuring, algebraic cross-checks, adversarial review, and code-assisted numerical verification.  All mathematical statements were independently verified by the author, who takes full responsibility for the scientific content.

\appendix

\section{Shell calculations}

\emph{Square.}  At \(a=(1/2,1/2)\), \(\tau=\ii\): the four \((m,n)\in\{-1,0\}^2\) give \(Q=1/2\); the next eight have one component of magnitude \(3/2\), giving \(Q=5/2\); the third shell is fourfold at \(9/2\).

\emph{Hexagonal.}  With \(Q_\rho(q)=\tfrac43(q_1^2-q_1q_2+q_2^2)\) and \(a=(1/3,2/3)\), the offsets \((0,0),(0,-1),(-1,-1)\) give \(Q=4/9\); the offsets \((-1,-2),(-1,0),(1,0)\) give \(16/9\); the third shell is sixfold at \(28/9\).  The gap is \(\tfrac43\).

\section{Numerical verifications}

All computations use \(30\)-digit arithmetic unless stated.

\emph{Fixed sets.}  \(\det(I-S_4)=2\), \(\det(I-S_3)=3\), \(\det(I-S_6)=1\); the computed fixed sets match \Cref{thm:units}(ii) and \Cref{rem:delicate}.

\emph{Laplacian identity.}  For \(\tau\in\{\ii,\rho,2\ii,0.3+1.1\ii\}\), \(\operatorname{tr}\Hess_\zeta\log D\) agrees with \(-4\pi/\Im\tau\) to ten digits, confirming \eqref{eq:laplacian} including at non-exceptional moduli.

\emph{Determinant values.}  \(D_\ii(1/2,1/2)=2\) and \(\vartheta_3(0\mid\ii)/\sqrt2=\eta(\ii)=0.768225422326056659\), confirming \Cref{prop:square-value}.  For the degree-three self-isogeny \(\phi=I-S_3\), numerical evaluation confirms
\(D_\rho(\phi a)=\prod_{t\in\ker\phi}D_\rho(a+t)\) at generic \(a\), and the product over its two nonzero kernel points is \(3\), confirming \Cref{prop:hex-value}.  The eight nonzero \(3\)-torsion values are \(\sqrt3\) twice and \(3^{1/6}=1.200936955176\) six times, with product \(9.0\).

\emph{Validity domains.}  The four branches of \Cref{prop:splitting} are the four lowest exactly when \eqref{eq:sharp-domain} holds; e.g.\ it fails at \(\varepsilon=(0.34,0.34)\) and holds at \((0.33,0.33)\).  In the round model the boundary is \(|\varepsilon_u|+|\varepsilon_v|=1/2\); e.g.\ it fails at \((0.26,0.26)\) and holds at \((0.24,0.24)\).

\emph{Finite device.}  Direct diagonalization of the \(8\times8\) twisted square and triangular graph Laplacians agrees with \eqref{eq:square-device-spectrum} and \eqref{eq:triangular-device-spectrum}.  At \(j=16\) GHz it gives the multiplicities, gaps and \(\varepsilon_u=0.10\) line positions of \Cref{tab:device-numbers}.  Finite differences at the deep holes give \(A_8=9.617883678\), \(C_8=18.236651000\), and \(B_8=19.071202700\) GHz per holonomy coordinate, and the exact centroid shifts at \(\varepsilon_u=0.10\) are \(0.09114\) GHz and \(0.18400\) GHz.  The large generic detuning \((0.25,0.125)\) gives \(1.992780284,4.393390568,6.770881663,9.171491947\) GHz.  A direct four-Lorentzian calculation gives a four-to-two-maxima transition near \(\Gamma=3.30\) GHz for equal residues, whereas the small generic quartet already has only two maxima at \(\Gamma=2\) GHz.  The Gaussian-disorder Monte Carlo values, including the 90th percentiles in \Cref{tab:tolerances}, use 200 realizations and a fixed seed.


\small
\begin{thebibliography}{99}

\bibitem{KokEtAl2007}
Kok, P.; Munro, W.J.; Nemoto, K.; Ralph, T.C.; Dowling, J.P.; Milburn, G.J.
Linear optical quantum computing with photonic qubits.
\emph{Rev. Mod. Phys.} \textbf{2007}, \emph{79}, 135--174.
\url{https://doi.org/10.1103/RevModPhys.79.135}.

\bibitem{BrechtEtAl2015}
Brecht, B.; Reddy, D.V.; Silberhorn, C.; Raymer, M.G.
Photon temporal modes: A complete framework for quantum information science.
\emph{Phys. Rev. X} \textbf{2015}, \emph{5}, 041017.
\url{https://doi.org/10.1103/PhysRevX.5.041017}.

\bibitem{FabreTreps2020}
Fabre, C.; Treps, N.
Modes and states in quantum optics.
\emph{Rev. Mod. Phys.} \textbf{2020}, \emph{92}, 035005.
\url{https://doi.org/10.1103/RevModPhys.92.035005}.

\bibitem{LuEtAl2020}
Lu, L.; Xia, L.; Chen, Z.; Chen, L.; Yu, T.; Tao, T.; Ma, W.; Pan, Y.; Cai, X.; Lu, Y.; Zhu, S.; Ma, X.-S.
Three-dimensional entanglement on a silicon chip.
\emph{npj Quantum Inf.} \textbf{2020}, \emph{6}, 30.
\url{https://doi.org/10.1038/s41534-020-0260-x}.

\bibitem{FangYuFan2012}
Fang, K.; Yu, Z.; Fan, S.
Realizing effective magnetic field for photons by controlling the phase of dynamic modulation.
\emph{Nat. Photonics} \textbf{2012}, \emph{6}, 782--787.
\url{https://doi.org/10.1038/nphoton.2012.236}.

\bibitem{MittalEtAl2014}
Mittal, S.; Fan, J.; Faez, S.; Migdall, A.; Taylor, J.M.; Hafezi, M.
Topologically robust transport of photons in a synthetic gauge field.
\emph{Phys. Rev. Lett.} \textbf{2014}, \emph{113}, 087403.
\url{https://doi.org/10.1103/PhysRevLett.113.087403}.

\bibitem{YuanShiFan2016}
Yuan, L.; Shi, Y.; Fan, S.
Photonic gauge potential in a system with a synthetic frequency dimension.
\emph{Opt. Lett.} \textbf{2016}, \emph{41}, 741--744.
\url{https://doi.org/10.1364/OL.41.000741}.

\bibitem{DuttEtAl2020}
Dutt, A.; Lin, Q.; Yuan, L.; Minkov, M.; Xiao, M.; Fan, S.
A single photonic cavity with two independent physical synthetic dimensions.
\emph{Science} \textbf{2020}, \emph{367}, 59--64.
\url{https://doi.org/10.1126/science.aaz3071}.

\bibitem{OzawaEtAl2019}
Ozawa, T.; Price, H.M.; Amo, A.; Goldman, N.; Hafezi, M.; Lu, L.; Rechtsman, M.C.;
Schuster, D.; Simon, J.; Zilberberg, O.; Carusotto, I.
Topological photonics.
\emph{Rev. Mod. Phys.} \textbf{2019}, \emph{91}, 015006.
\url{https://doi.org/10.1103/RevModPhys.91.015006}.

\bibitem{HafeziEtAl2011}
Hafezi, M.; Demler, E.A.; Lukin, M.D.; Taylor, J.M.
Robust optical delay lines with topological protection.
\emph{Nat. Phys.} \textbf{2011}, \emph{7}, 907--912.
\url{https://doi.org/10.1038/nphys2063}.

\bibitem{HafeziEtAl2013}
Hafezi, M.; Mittal, S.; Fan, J.; Migdall, A.; Taylor, J.M.
Imaging topological edge states in silicon photonics.
\emph{Nat. Photonics} \textbf{2013}, \emph{7}, 1001--1005.
\url{https://doi.org/10.1038/nphoton.2013.274}.

\bibitem{DinhEtAl2024}
Dinh, H.X.; Bal\v{c}ytis, A.; Ozawa, T.; Ota, Y.; Ren, G.; Baba, T.; Iwamoto, S.; Mitchell, A.; Nguyen, T.G.
Reconfigurable synthetic dimension frequency lattices in an integrated lithium niobate ring cavity.
\emph{Commun. Phys.} \textbf{2024}, \emph{7}, 185.
\url{https://doi.org/10.1038/s42005-024-01676-9}.

\bibitem{Hofstadter1976}
Hofstadter, D.R.
Energy levels and wave functions of Bloch electrons in rational and irrational magnetic fields.
\emph{Phys. Rev. B} \textbf{1976}, \emph{14}, 2239--2249.
\url{https://doi.org/10.1103/PhysRevB.14.2239}.

\bibitem{AharonovBohm1959}
Aharonov, Y.; Bohm, D.
Significance of electromagnetic potentials in the quantum theory.
\emph{Phys. Rev.} \textbf{1959}, \emph{115}, 485--491.
\url{https://doi.org/10.1103/PhysRev.115.485}.

\bibitem{ButtikerImryLandauer1983}
B\"uttiker, M.; Imry, Y.; Landauer, R.
Josephson behavior in small normal one-dimensional rings.
\emph{Phys. Lett. A} \textbf{1983}, \emph{96}, 365--367.
\url{https://doi.org/10.1016/0375-9601(83)90011-7}.

\bibitem{Bar2019}
B\"ar, C.
The curl operator on odd-dimensional manifolds.
\emph{J. Math. Phys.} \textbf{2019}, \emph{60}, 031501.
\url{https://doi.org/10.1063/1.5082528}.

\bibitem{Folland1989}
Folland, G.B.
Harmonic analysis of the de Rham complex on the sphere.
\emph{J. Reine Angew. Math.} \textbf{1989}, \emph{398}, 130--143.
\url{https://doi.org/10.1515/crll.1989.398.130}.

\bibitem{IrvineBouwmeester2008}
Irvine, W.T.M.; Bouwmeester, D.
Linked and knotted beams of light.
\emph{Nat. Phys.} \textbf{2008}, \emph{4}, 716--720.
\url{https://doi.org/10.1038/nphys1056}.

\bibitem{KediaEtAl2013}
Kedia, H.; Bia{\l}ynicki-Birula, I.; Peralta-Salas, D.; Irvine, W.T.M.
Tying knots in light fields.
\emph{Phys. Rev. Lett.} \textbf{2013}, \emph{111}, 150404.
\url{https://doi.org/10.1103/PhysRevLett.111.150404}.

\bibitem{KopinskiNatario2017}
Kopi\'nski, J.; Nat\'ario, J.
On a remarkable electromagnetic field in the Einstein Universe.
\emph{Gen. Relativ. Gravit.} \textbf{2017}, \emph{49}, 81.
\url{https://doi.org/10.1007/s10714-017-2242-7}.

\bibitem{ConwaySloane1999}
Conway, J.H.; Sloane, N.J.A.
\emph{Sphere Packings, Lattices and Groups}, 3rd ed.; Grundlehren der mathematischen Wissenschaften;
Springer: New York, NY, USA, 1999; Volume 290.
\url{https://doi.org/10.1007/978-1-4757-6568-7}.

\bibitem{Silverman2009}
Silverman, J.H.
\emph{The Arithmetic of Elliptic Curves}, 2nd ed.; Graduate Texts in Mathematics;
Springer: New York, NY, USA, 2009; Volume 106.
\url{https://doi.org/10.1007/978-0-387-09494-6}.

\bibitem{Siegel1980}
Siegel, C.L.
\emph{On Advanced Analytic Number Theory}, 2nd ed.; Tata Institute of Fundamental Research:
Bombay, India, 1980.

\bibitem{Lang1987}
Lang, S.
\emph{Elliptic Functions}, 2nd ed.; Graduate Texts in Mathematics;
Springer: New York, NY, USA, 1987; Volume 112.
\url{https://doi.org/10.1007/978-1-4612-4752-4}.

\bibitem{RaySinger1971}
Ray, D.B.; Singer, I.M.
\(R\)-torsion and the Laplacian on Riemannian manifolds.
\emph{Adv. Math.} \textbf{1971}, \emph{7}, 145--210.
\url{https://doi.org/10.1016/0001-8708(71)90045-4}.

\bibitem{Quillen1985}
Quillen, D.
Determinants of Cauchy--Riemann operators over a Riemann surface.
\emph{Funct. Anal. Appl.} \textbf{1985}, \emph{19}, 31--34.
\url{https://doi.org/10.1007/BF01086022}.

\bibitem{KubertLang1981}
Kubert, D.S.; Lang, S.
\emph{Modular Units}; Grundlehren der mathematischen Wissenschaften;
Springer: New York, NY, USA, 1981; Volume 244.
\url{https://doi.org/10.1007/978-1-4757-1741-9}.

\bibitem{deJong2005}
de Jong, R.
On the Arakelov theory of elliptic curves.
\emph{Enseign. Math.} \textbf{2005}, \emph{51}, 179--201.
\href{https://arxiv.org/abs/math/0312359}{arXiv:math/0312359}.

\bibitem{PlanatEtAl2022Character}
Planat, M.; Amaral, M.M.; Fang, F.; Chester, D.; Aschheim, R.; Irwin, K.
Character varieties and algebraic surfaces for the topology of quantum computing.
\emph{Symmetry} \textbf{2022}, \emph{14}, 915.
\url{https://doi.org/10.3390/sym14050915}.

\bibitem{PlanatEtAl2022Fricke}
Planat, M.; Chester, D.; Amaral, M.M.; Irwin, K.
Fricke topological qubits.
\emph{Quantum Rep.} \textbf{2022}, \emph{4}, 523--532.
\url{https://doi.org/10.3390/quantum4040037}.

\end{thebibliography}
\end{document}